\newcounter{nummer}
\newcommand{\Wone}{\ensuremath{\textup{W}[1]}}
\newcommand{\Rd}{\ensuremath{\mathbb{R}^d}}
\newcommand{\Rpd}{\mathbb{R}_{\ge 0}^d}
\newcommand{\R}{\mathbb{R}}
\newcommand{\Oh}{\mathcal{O}}
\newcommand{\Vol}{\textsc{vol}}
\newcommand{\union}{\mathcal{U}}
\newcommand{\HYP}{\textsc{Hypervolume}}
\newcommand{\KMP}{\textsc{KMP}}
\newcommand{\CKMP}{\textsc{Cube-KMP}}
\newcommand{\UCKMP}{\textsc{Unitcube-KMP}}
\newcommand{\GRND}[1]{\textsc{\ensuremath{#1}-Grounded}}
\newcommand{\THYP}{\ensuremath{T_\HYP}}
\newcommand{\TKMP}{\ensuremath{T_\KMP}}
\newcommand{\TCKMP}{\ensuremath{T_\CKMP}}
\newcommand{\TUCKMP}{\ensuremath{T_\UCKMP}}
\newcommand{\TGRND}[1]{\ensuremath{T_\GRND{#1}}}
\def\polylog{\operatorname{polylog}}
\newcommand{\figref}[1]{Figure~\ref{fig:#1}}
\newcommand{\thmref}[1]{Theorem~\ref{thm:#1}}
\newcommand{\lemref}[1]{Lemma~\ref{lem:#1}}
\newcommand{\corref}[1]{Corollary~\ref{cor:#1}}
\newcommand{\secref}[1]{Section~\ref{sec:#1}}
\renewcommand{\epsilon}{\ensuremath{\varepsilon}}
\newcommand{\eps}{\ensuremath{\varepsilon}}
\newcommand{\ignore}[1]{}
\let\oldsqrt\sqrt
\def\hksqrt{\mathpalette\DHLhksqrt}
\def\DHLhksqrt#1#2{\setbox0=\hbox{$#1\oldsqrt{#2\,}$}\dimen0=\ht0
   \advance\dimen0-0.2\ht0
   \setbox2=\hbox{\vrule height\ht0 depth -\dimen0}%
   {\box0\lower0.4pt\box2}}
\renewcommand\sqrt\hksqrt
\renewcommand{\leq}{\leqslant}
\renewcommand{\le}{\leqslant}
\renewcommand{\ge}{\geqslant}
\def\nphantom{\v@true\h@true\nph@nt}
\def\nvphantom{\v@true\h@false\nph@nt}
\def\nhphantom{\v@false\h@true\nph@nt}
\def\nph@nt{\ifmmode\def\next{\mathpalette\nmathph@nt}%
  \else\let\next\nmakeph@nt\fi\next}
\def\nmakeph@nt#1{\setbox\z@\hbox{#1}\nfinph@nt}
\def\nmathph@nt#1#2{\setbox\z@\hbox{$\m@th#1{#2}$}\nfinph@nt}
\def\nfinph@nt{\setbox\tw@\null
  \ifv@ \ht\tw@\ht\z@ \dp\tw@\dp\z@\fi
  \ifh@ \wd\tw@-\wd\z@\fi \box\tw@}
\def\now{\minute=\time \hour=\time \divide \hour by 60 \hourMins=\hour \multiply\hourMins by 60
  \advance\minute by -\hourMins \zeroPadTwo{\the\hour}:\zeroPadTwo{\the\minute}}
\def\today{\the\year-\zeroPadTwo{\the\month}-\zeroPadTwo{\the\day}}
\def\zeroPadTwo#1{\ifnum #1<10 0\fi #1}
\begin{document}

\title{\textbf{Bringing Order to Special Cases of  \\ Klee's Measure Problem}}

\author{
    Karl Bringmann
}
\date{}
\institute{Max Planck Institute for Informatics}

\maketitle

\begin{abstract}
Klee's Measure Problem (KMP) asks for the volume of the union of $n$ axis-aligned boxes in $\R^d$. Omitting logarithmic factors, the best algorithm has runtime $\Oh^*(n^{d/2})$ [Overmars,Yap'91]. There are faster algorithms known for several special cases: \CKMP\ (where all boxes are cubes), \UCKMP\ (where all boxes are cubes of equal side length), \HYP\ (where all boxes share a vertex), and \GRND{k} (where the projection onto the first $k$ dimensions is a \HYP\ instance). 

In this paper we bring some order to these special cases by providing reductions among them. In addition to the trivial inclusions, we establish \HYP\ as the easiest of these special cases, and show that the runtimes of \UCKMP\ and \CKMP\ are polynomially related. 
More importantly, we show that any algorithm for one of the special cases with runtime $T(n,d)$ implies an algorithm for the general case with runtime $T(n,2d)$, yielding the first non-trivial relation between KMP and its special cases. This allows to transfer \Wone-hardness of \KMP\ to all special cases, proving that no $n^{o(d)}$ algorithm exists for any of the special cases assuming the Exponential Time Hypothesis. Furthermore, assuming that there is no \emph{improved} algorithm for the general case of KMP (no algorithm with runtime $\Oh(n^{d/2 - \eps})$) this reduction shows that there is no algorithm with runtime $\Oh(n^{\lfloor d/2 \rfloor /2 - \eps})$ for any of the special cases. Under the same assumption we show a tight lower bound for a recent algorithm for \GRND{2} [Y{\i}ld{\i}z,Suri'12].
\end{abstract}


\definecolor{light}{rgb}{0.85 0.85 0.85}
\definecolor{dark}{rgb}{0.65 0.65 0.65}

\section{Introduction}

\textbf{Klee's measure problem (\textbf{\KMP})} asks for the volume of the union of $n$ axis-aligned boxes in $\Rd$, where $d$ is considered to be a constant. 
This is a classic problem with a long history~\cite{Klee77,Bentley77,LeeuwenW81,OvermarsY91,Chan09,FredmanW78}. The fastest algorithm has runtime $\Oh(n^{d/2} \log n)$ for $d \ge 2$, given by Overmars and Yap~\cite{OvermarsY91}, which was slightly improved to $n^{d/2} 2^{\Oh(\log^*n)}$ by Chan~\cite{Chan09}. Thus, for over twenty years there has been no improvement over the runtime bound $n^{d/2}$. As already expressed in~\cite{Chan09}, one might conjecture that no \emph{improved} algorithm for \KMP\ exists, i.e., no algorithm with runtime $\Oh(n^{d/2 - \eps})$ for some $\eps > 0$. 

However, no matching lower bound is known, not even under reasonable complexity theoretic assumptions. The best unconditional lower bound is $\Omega(n \log n)$ for any dimension $d$~\cite{FredmanW78}. Chan~\cite{Chan09} proved that \KMP\ is \Wone-hard by giving a reduction to the $k$-Clique problem. Since his reduction has $k = d/2$, we can transfer runtime lower bounds from $k$-Clique to \KMP, implying that there is no $n^{o(d)}$ algorithm for \KMP\ assuming the Exponential Time Hypothesis (see~\cite{fptsurvey}). However, this does not determine the correct constant in the exponent. Moreover, Chan argues that since no ``purely combinatorial'' algorithm with runtime $\Oh(n^{k-\eps})$ is known for Clique, it might be that there is no such algorithm with runtime $\Oh(n^{d/2-\eps})$ for \KMP, but this does not rule out faster algorithms using, e.g., fast matrix multiplication techniques.

Since no further progress was made for \KMP\ for a long time, research turned to the study of special cases. Over the years, the following special cases have been investigated. For each one we list the asymptotically fastest results.

\begin{itemize}
\item \textbf{\CKMP:} Here the given boxes are cubes, not necessarily all with the same side length. This case can be solved in time $\Oh(n^{(d+2)/3})$ for $d \ge 2$~\cite{Bringmann10}. In dimension $d=3$ this has been improved to $\Oh(n \log^4 n)$ by Agarwal~\cite{Agarwal10}. In dimensions $d \le 2$ even the general case can be solved in time $\Oh(n \log n)$, the same bound clearly applies to this special case. As described in~\cite{Bringmann10}, there are simple reductions showing that the case of cubes is roughly the same as the case of ``$\alpha$-fat boxes'', where all side lengths of a box differ by at most a constant factor $\alpha$.

\item \textbf{\UCKMP:} Here the given boxes are cubes, all of the same side length. This is a specialization of \CKMP, so all algorithms from above apply. The combinatorial complexity of a union of unit cubes is $\Oh(n^{\lfloor d/2 \rfloor})$~\cite{Boissonnat98}. Using this, there are algorithms with runtime $\Oh(n^{\lfloor d/2 \rfloor} \polylog n)$~\cite{KaplanRSV07} and $\Oh(n^{\lceil d/2 \rceil - 1 + \frac{1}{\lceil d/2 \rceil}} \polylog n)$~\cite{Chan03}. Again, there is a generalization to ``$\alpha$-fat boxes of roughly equal size'', and any algorithm for \UCKMP\ can be adapted to an algorithm for this generalization~\cite{Bringmann10}.

\item \textbf{\HYP:} Here all boxes have a common vertex. Without loss of generality, we can assume that they share the vertex $(0,\ldots,0) \in \Rd$ and lie in the positive orthant~$\R_{\ge0}^d$. This special case is of particular interest for practice, as it is used as an indicator of the quality of a set of points in the field of Evolutionary Multi-Objective Optimization~\cite{ZitzlerT99,BeumeNE07,ZitzlerK04,Igel06}.
Improving upon the general case of \KMP, there is an algorithm with runtime $\Oh(n \log n)$ for $d=3$~\cite{BeumeFLIPV}.
The same paper also shows an unconditional lower bound of $\Omega(n \log n)$ for $d > 1$, while $\#P$-hardness in the number of dimensions was shown in~\cite{BF09b}. 
Recently, an algorithm with runtime $\Oh(n^{(d-1)/2} \log n)$ for $d \ge 3$ was presented in~\cite{YildizS12}.

\item \textbf{\GRND{k}:} Here the projection of the input boxes to the first $k$ dimensions is a \HYP\ instance, where $0 \le k \le d$, the other coordinates are arbitrary. This rather novel special case appeared in~\cite{YildizS12}, where an algorithm with runtime $\Oh(n^{(d-1)/2} \log^2 n)$ for $d \ge 3$ was given for \GRND{2}. 
\end{itemize}

Note that for none of these special cases \Wone-hardness is known, so there is no larger lower bound than $\Omega(n \log n)$ (for constant or slowly growing $d$), not even under reasonable complexity theoretic assumptions.
Also note that there are trivial inclusions of some of these special cases: Each special case can be seen as a subset of all instances of the general case. As such subsets, the following inclusions hold.
\begin{itemize}
\item $\UCKMP \subseteq \CKMP \subseteq \KMP$.
\item $\GRND{(k+1)} \subseteq \GRND{k}$ for all $k$.
\item $\GRND{d} = \HYP$ and $\GRND{0} = \KMP$.
\end{itemize}
This allows to transfer some results listed above to other special cases. E.g., the \CKMP\ algorithm with runtime $\Oh(n^{(d+2)/3})$ also applies to \UCKMP.

\subsection{Our results}

We present several reductions among the above four special cases and the general case of \KMP. They provide bounds on the runtimes needed for these variants and, thus, yield some order among the special cases.

Our first reduction relates \HYP\ and \UCKMP.

\begin{theorem} \label{thm:hypuckmp}
  If there is an algorithm for \UCKMP\ with runtime $\TUCKMP(n,d)$, then there is an algorithm for \HYP\ with runtime
  \begin{align*}
    \THYP(n,d) \le \Oh(\TUCKMP(n,d)).
  \end{align*}
\end{theorem}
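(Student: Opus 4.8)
The plan is to give a direct, dimension-preserving instance transformation that turns a \HYP\ instance into a \UCKMP\ instance with only $\Oh(n)$ extra boxes, so that the hypervolume can be read off from the unit-cube union volume by subtracting a fixed constant. Write the \HYP\ instance as $n$ points $\bp^{(1)},\dots,\bp^{(n)} \in \Rpd$, so that the quantity to compute is $V = \Vol\big(\bigcup_j \prod_{i=1}^d [0,p^{(j)}_i]\big)$, and let $M = \max_{i,j} p^{(j)}_i$ (if $M=0$ then $V=0$ and we are done). The first idea is to replace each grounded box $\prod_i[0,p^{(j)}_i]$ by the axis-aligned cube of side length $M$ sharing its ``upper'' corner, namely $C_j = \prod_{i=1}^d [\,p^{(j)}_i - M,\ p^{(j)}_i\,]$. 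All $C_j$ have the same side length $M$, so $\{C_j\}$ is already a legal \UCKMP\ instance.

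The central observation I would establish is that inside the positive orthant each cube coincides with the box it came from: for $\bx \in [0,M]^d$ we have $\bx \in C_j$ iff $p^{(j)}_i - M \le x_i \le p^{(j)}_i$ for all $i$, and since $x_i \ge 0 \ge p^{(j)}_i - M$ the lower inequalities hold automatically, leaving exactly $x_i \le p^{(j)}_i$; hence $C_j \cap [0,M]^d = \prod_i[0,p^{(j)}_i]$. Consequently $\big(\bigcup_j C_j\big) \cap [0,M]^d$ is precisely the hypervolume region and has volume $V$. The obstacle is that the cubes also protrude into the $2^d-1$ neighbouring orthants of the enclosing box $[-M,M]^d$, contributing an instance-dependent ``excess'' volume that a \UCKMP\ solver cannot simply discard.

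The fix I propose is to flood this excess region deterministically: for every nonempty $S \subseteq \{1,\dots,d\}$ add one further cube of side $M$, namely $F_S = \prod_{i\in S}[-M,0]\times\prod_{i\notin S}[0,M]$. These $2^d-1$ cubes, together with $[0,M]^d$, tile $[-M,M]^d$, so $\bigcup_{S\neq\emptyset}F_S$ equals $E := [-M,M]^d \setminus [0,M]^d$, of volume $(2^d-1)M^d$. Since every $C_j \subseteq [-M,M]^d$, the full union $U=\bigcup_j C_j \cup \bigcup_{S\neq\emptyset}F_S$ splits across the hyperplanes $x_i=0$ into its part in $[0,M]^d$ (the hypervolume region, of volume $V$, because the $F_S$ meet $[0,M]^d$ only in a null set) and its part in $E$ (all of $E$, of volume $(2^d-1)M^d$). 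Hence a single \UCKMP\ call returns $W = V + (2^d-1)M^d$, and I output $V = W - (2^d-1)M^d$.

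Finally I would bound the cost: the new instance has $n + 2^d - 1 = \Oh(n)$ equal-side cubes in $\Rd$, is built in $\Oh(nd + 2^d d)$ time, and for constant $d$ this yields $\THYP(n,d) \le \TUCKMP(n+2^d-1,d) + \Oh(n) = \Oh(\TUCKMP(n,d))$, using the mild monotonicity $\TUCKMP(n+\Oh(1),d)=\Oh(\TUCKMP(n,d))$. The only genuinely nontrivial step is recognizing the protrusion problem and neutralizing it with a \emph{constant} number of tiling ``filler'' cubes; the rest is the routine verification that $C_j\cap[0,M]^d$ is the original grounded box and that the $F_S$ exactly cover the complementary orthants.
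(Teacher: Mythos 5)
Your reduction is correct and rests on the same key construction as the paper: each grounded box is extended to a cube of side length equal to the largest coordinate $\Delta$, anchored at its upper corner so that its intersection with the positive orthant is exactly the original box. The only (inessential) difference is in how the protruding excess volume is accounted for --- the paper makes two \UCKMP\ calls, once with and once without the single additional cube $[0,\Delta]^d$, and takes the difference, whereas you make a single call after adding $2^d-1$ filler cubes tiling the complementary orthants; both variants work and incur the same $\Oh(\TUCKMP(n,d))$ cost.
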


Note that if \HYP\ were a subset of \UCKMP, then the same statement would hold, with the constant hidden by the $\Oh$-notation being~1. Hence, this reduction can nearly be seen as an inclusion.
Moreover, together with the trivial inclusions this reduction establishes \HYP\ as the easiest of all studied special cases. 

\begin{corollary} \label{cor:nlogn}
  For all studied special cases, \HYP, \UCKMP, \CKMP, and \GRND{k} (for any $0 \le k \le d$), we have the unconditional lower bound $\Omega(n \log n)$ for any $d > 1$.
\end{corollary}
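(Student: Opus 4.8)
The plan is to treat the known $\Omega(n \log n)$ lower bound for \HYP\ (valid for every $d > 1$, as cited above) as the single hard anchor, and to propagate it to all remaining special cases along the reduction of \thmref{hypuckmp} together with the trivial inclusions. The guiding principle is that lower bounds travel in the \emph{opposite} direction to algorithms: a reduction that turns any \UCKMP\ algorithm into a \HYP\ algorithm of the same asymptotic cost forces \UCKMP\ to be at least as hard as \HYP, and an inclusion $A \subseteq B$ forces $B$ to be at least as hard as $A$, since every hard instance of $A$ is a legal input to any algorithm for $B$. Both steps, crucially, leave the ambient dimension $d$ unchanged.

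First I would handle \UCKMP. Suppose toward a contradiction that for some fixed $d > 1$ there were a \UCKMP\ algorithm with runtime $\TUCKMP(n,d) = o(n \log n)$. By \thmref{hypuckmp} this would yield a \HYP\ algorithm with runtime $\THYP(n,d) \le \Oh(\TUCKMP(n,d)) = o(n \log n)$ in the \emph{same} dimension $d$, contradicting the \HYP\ lower bound for $d > 1$. Hence \UCKMP\ requires $\Omega(n \log n)$ for every $d > 1$. I would then push this bound upward through the inclusion $\UCKMP \subseteq \CKMP$: any algorithm for \CKMP\ in particular solves the hard \UCKMP\ instances (which are genuine \CKMP\ inputs), so \CKMP\ inherits the $\Omega(n \log n)$ bound, and the same reasoning carries it to \KMP\ via $\CKMP \subseteq \KMP$.

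For the grounded family I would use $\GRND{d} = \HYP$, which already carries the bound for $d > 1$, and then descend one coordinate at a time along the chain $\GRND{(k+1)} \subseteq \GRND{k}$ from $k = d$ down to $k = 0 = \KMP$, thereby covering all $0 \le k \le d$ at fixed total dimension $d > 1$. The only point demanding care — and the main obstacle, if one can call it that — is bookkeeping the direction of each implication: both the reduction and every inclusion must be read as making the \emph{target} problem harder rather than easier, and one must verify that no step alters the dimension (\thmref{hypuckmp} and all listed inclusions preserve $d$), so that each contradiction is taken against the \HYP\ bound in the matching dimension, which is precisely where the hypothesis $d > 1$ enters.
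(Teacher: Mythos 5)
Your proposal is correct and matches the paper's intended argument: the paper derives \corref{nlogn} exactly by anchoring at the known $\Omega(n\log n)$ bound for \HYP, transferring it to \UCKMP\ via \thmref{hypuckmp}, and to \CKMP\ and \GRND{k} via the trivial inclusions $\UCKMP \subseteq \CKMP$ and $\HYP = \GRND{d} \subseteq \GRND{k}$, all dimension-preserving. Your care about the direction of the implications and the fixed dimension is precisely the right bookkeeping.
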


One can find contradicting statements regarding the feasibility of a reduction as in \thmref{hypuckmp} in the literature.
On the one hand, existence of such a reduction has been mentioned in~\cite{YildizS12}. 
On the other hand, a newer paper~\cite{GuerreiroFE12} contains this sentence: ``Better bounds have been obtained for the KMP on unit
cubes ..., but reducing
the hypervolume indicator to such problems is not possible in general.''
In any case, to the best of our knowledge a proof of such a statement cannot be found anywhere in the literature.

Our second reduction substantiates the intuition that the special cases \CKMP\ and \UCKMP\ are very similar, by showing that their runtimes differ by at most a factor of $\Oh(n)$. Recall that $\UCKMP \subseteq \CKMP$ was one of the trivial inclusions. We prove an inequality in the other direction in the following theorem.

\begin{theorem} \label{thm:uckmpckmp}
  If there is an algorithm for \UCKMP\ with runtime $\TUCKMP(n,d)$, then there is an algorithm for \CKMP\ with runtime
  \begin{align*}
    \TCKMP(n,d) \le \Oh( n \cdot \TUCKMP(n,d) ).
  \end{align*}
\end{theorem}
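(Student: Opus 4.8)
The plan is to process the cubes from largest to smallest and charge to each cube the volume it adds over the larger ones. Sort the cubes so that their side lengths satisfy $s_1 \ge s_2 \ge \dots \ge s_n$ and write $U_k = \bigcup_{j \le k} C_j$. Since the sets $C_k \setminus U_{k-1}$ partition the union, the target volume telescopes as
\begin{align*}
\mathrm{vol}\Big(\bigcup_{i=1}^n C_i\Big) = \sum_{k=1}^n \Big( \mathrm{vol}(C_k) - \mathrm{vol}\big(C_k \cap U_{k-1}\big)\Big),
\end{align*}
so it suffices to compute, for each $k$, the volume of the part of $C_k$ covered by the \emph{larger} cubes $C_1,\dots,C_{k-1}$, i.e.\ $\mathrm{vol}(C_k \cap U_{k-1}) = \mathrm{vol}\big(\bigcup_{j<k}(C_j \cap C_k)\big)$.

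The key observation I would establish is that each larger cube meets $C_k$ in a \emph{corner-anchored} box. Assume w.l.o.g.\ that $C_k = [0,s_k]^d$. For $j < k$ we have $s_j \ge s_k$, so in every coordinate the slab of $C_j$ is an interval of length $s_j \ge s_k$; intersecting such an interval with $[0,s_k]$ can never leave an interval strictly inside $(0,s_k)$, since that would require length $< s_k$. Hence in each coordinate $C_j \cap C_k$ is an interval of width at most $s_k$ that touches $0$ or $s_k$ (or is empty, or full), i.e.\ $C_j \cap C_k$ is a box anchored to one of the corners of $C_k$.

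Next I would rescale $C_k$ to the unit cube $Q = [0,1]^d$. After rescaling, each box $C_j \cap C_k$ becomes, in every coordinate, an interval of width at most $1$ touching an end of $[0,1]$ — and any such interval is exactly the restriction to $[0,1]$ of a suitable \emph{unit} interval (e.g.\ $[0,a]$ is the clip of $[a-1,a]$, and $[1-b,1]$ the clip of $[1-b,2-b]$). Therefore $C_j \cap C_k$ equals $U_j \cap Q$ for a unit cube $U_j$ that we can write down explicitly, so the covered region is $\big(\bigcup_{j<k} U_j\big) \cap Q$ and $\mathrm{vol}(C_k \cap U_{k-1}) = s_k^d \cdot \mathrm{vol}\big((\bigcup_{j<k} U_j) \cap Q\big)$. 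The unit-scale covered volume can be extracted from two ordinary \UCKMP\ instances via inclusion–exclusion for two sets: with $A = Q$ and $B = \bigcup_{j<k} U_j$,
\begin{align*}
\mathrm{vol}(A \cap B) = \mathrm{vol}(A) + \mathrm{vol}(B) - \mathrm{vol}(A \cup B),
\end{align*}
where $\mathrm{vol}(A)=1$, $\mathrm{vol}(B)$ is a \UCKMP\ call on $\{U_j : j<k\}$, and $\mathrm{vol}(A\cup B)$ is a \UCKMP\ call on $\{U_j : j<k\} \cup \{Q\}$. Crucially $Q$ is itself a unit cube, so both are genuine \UCKMP\ inputs of side length $1$.

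Summing over the $n$ cubes, this uses $\Oh(n)$ calls to the \UCKMP\ solver, each on at most $n+1$ unit cubes in dimension $d$, plus $\Oh(n \log n)$ for sorting and $\Oh(n^2 d)$ for writing down the instances; since $\TUCKMP(n,d) = \Omega(n)$, the total is $\Oh(n \cdot \TUCKMP(n,d))$, as claimed. I expect the main obstacle to be the geometric step: one must realise that measuring $C_k$ against the \emph{larger} cubes (rather than the smaller ones) forces every pairwise intersection to be corner-anchored, and that a corner-anchored interval of bounded width is precisely a clipped unit interval. Handling the smaller cubes first, or trying to treat the covered region as a single \HYP\ instance, both fail, because the intersection boxes are in general anchored to several different corners of $C_k$ simultaneously; it is exactly the two-set inclusion–exclusion that lets one bypass this and reduce the clipped union $(\bigcup_{j<k} U_j)\cap Q$ to plain \UCKMP.
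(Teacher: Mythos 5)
Your proposal is correct and follows essentially the same route as the paper: the paper also isolates the smallest remaining cube $C$, observes that every larger cube meets $C$ in a corner-anchored box that can be re-extended to a cube of $C$'s side length, and recovers the contribution of $C$ from two \UCKMP{} computations (once with and once without $C$), summing over all $n$ cubes. Your telescoping formulation and the rescaling to the unit cube are only cosmetic differences from the paper's recursive ``compute the contribution of the smallest cube, delete it, and recurse'' presentation.
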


Our third and last reduction finally allows to show lower bounds for all special cases. We show an inequality between the general case of \KMP\ and \GRND{2k}, in the opposite direction than the trivial inclusions. For this, we have to increase the dimension in which we consider \GRND{2k}.

\begin{theorem} \label{thm:thereduction}
  If there is an algorithm for \GRND{2k} in dimension $d+k$ with runtime $\TGRND{2k}(n,d+k)$, then there is an algorithm for \KMP\ in dimension $d$ with runtime
  \begin{align*}
    \TKMP(n,d) \le \Oh(\TGRND{2k}(n,d+k)).
  \end{align*}
\end{theorem}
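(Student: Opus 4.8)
The plan is to isolate the single-coordinate version of the reduction and then iterate it. Concretely, I would first prove the case $k=1$: from a \KMP\ instance in dimension $d$ build a \GRND{2} instance in dimension $d+1$ with the same answer. Granting this one step, I then apply it $k$ times, each time grounding one of the coordinates that is still arbitrary while checking that coordinates grounded in earlier steps remain grounded. After $\ell$ steps the first $2\ell$ coordinates are grounded and the dimension has grown to $d+\ell$, so after $k$ steps we reach a \GRND{2k} instance in dimension $d+k$, as required. Because each step turns an instance with $n$ boxes into one with $\Oh(n)$ boxes (a constant factor that may depend on $d$, which is harmless since $d$ is fixed) and is computable in linear time, composing the $k$ steps stays within the budget $\Oh(\TGRND{2k}(n,d+k))$.

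For the single step I would normalise the instance so that all boxes lie in $[0,U]^d$ for a common $U$ (a translation into the positive orthant, which preserves the volume of the union). The key idea for grounding one coordinate, say the last, is reflection: the lower constraint $x_d\ge a$ is not of grounded form, but in a fresh coordinate the reflected constraint $U-x_d\le U-a$ is. Thus the two-sided interval $[a,b]$ in coordinate $d$ should be represented by the two grounded constraints \emph{coordinate} $\le b$ and \emph{fresh coordinate} $\le U-a$, turning each box $R_j\times[a_j,b_j]$ into a box that is grounded in the new pair of coordinates and untouched (arbitrary) on $R_j$.

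The hard part, and the real content of the theorem, is to show that such a grounding preserves the volume of the \emph{union}. Here a naive one-box-to-one-box encoding cannot work: all grounded boxes contain the origin in the two new coordinates and hence pairwise overlap there, so two boxes whose $d$-th intervals are disjoint but whose bases $R_j$ intersect can never have their union volume reproduced by two grounded rectangles, whose areas always overlap at the origin. Overcoming this obstruction is the crux. It forces the fresh coordinate to do more than reflect: it must realise the implicit subtraction $\mathbf{1}[a\le x_d\le b]=\mathbf{1}[x_d\le b]-\mathbf{1}[x_d\le a]$ at the level of the whole union, which I expect to require either a bounded-size inclusion--exclusion over the lower endpoints (affordable because $d$, and hence the number of terms, is constant) or a stratification of the boxes along the new coordinate.

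To verify the final identity I would argue slice-wise: fixing the base coordinates, Fubini reduces the claim to a one-dimensional statement comparing the covered length in the coordinate being grounded with the (possibly signed) area produced in that coordinate and its fresh partner. Establishing this slice-wise identity is where I expect the main difficulty to concentrate, precisely because of the origin-overlap obstruction above; once it holds, integrating over the base gives the global equality, and the theorem follows by composing the $k$ single-coordinate steps while tracking the dimension increment and the linear growth in the number of boxes as in the first paragraph.
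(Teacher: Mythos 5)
Your high-level plan (do one coordinate at a time, then iterate) is compatible with the paper, which instead grounds all $k$ coordinates simultaneously and combines the resulting $2^k$ grounded volumes by an explicit inclusion--exclusion over subsets $S\subseteq[k]$; an honest iteration of a correct single step would produce essentially the same $2^k$ signed terms. You also correctly identify the central obstruction: any per-box grounding (such as your reflection $x_d\mapsto(x_d,\,U-x_d)$) makes all images overlap at the origin of the new coordinate pair, so it cannot reproduce the union volume. The problem is that you stop there. The single-coordinate step is the entire content of the theorem, and your proposal does not contain it: you say you ``expect'' it to require either an inclusion--exclusion over the lower endpoints or a stratification, but the first of these is not bounded by a function of $d$ --- the number of lower endpoints is $n$, so that inclusion--exclusion has $2^n$ terms, not $\Oh(1)$ --- and the second is not specified. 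As written, the crux is left unresolved, so this is a genuine gap rather than a different route.

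For comparison, the paper's single-coordinate step is not a reflection but a change of geometry. Sort the endpoints $x_1<\dots<x_m$ of the projections onto the coordinate being grounded, and map each elementary interval $[x_i,x_{i+1}]$ to a unit-width box $A_i:=[m-i-1,m-i]\times[x_i,x_{i+1}]$ in a fresh coordinate pair; these form a descending staircase $A=\bigcup_i A_i$ with $\Vol(A_i)=x_{i+1}-x_i$. Each input interval $I=[x_j,x_k]$ becomes the grounded box $C_I:=[0,m-j]\times[0,x_k]$, which contains exactly those $A_i$ whose elementary interval lies in $I$, so the covered length equals $\Vol_A(C_M)$. The restriction to $A$ is then implemented by only a two-term difference: with $T_0$ the grounded region strictly below the staircase and $T_1=T_0\cup A$ (both unions of $\Oh(n)$ grounded boxes), one has $\Vol_A(C_M)=\Vol(A)+\Vol(T_0\cup\union(C_M))-\Vol(T_1\cup\union(C_M))$. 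This is the ingredient your proposal would need to supply before the iteration (and the bookkeeping of signs across the $k$ steps) can be carried out.
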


Note that, if we set $k = d$, the special case \GRND{2k} in $d+k$ dimensions becomes \HYP\ in $2d$ dimensions. Since we established \HYP\ as the easiest variant, the above reduction allows to transfer \Wone-hardness from the general case to all special cases. Since the dimension is increased only by a constant factor, even the tight lower bound on the runtime can be transferred to all special cases.

\begin{corollary} \label{cor:Wone}
  There is no $n^{o(d)}$ algorithm for any of the special cases \HYP, \UCKMP, \CKMP, and \GRND{k}, assuming the Exponential Time Hypothesis.
\end{corollary}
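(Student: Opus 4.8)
The plan is to bootstrap the known \Wone-hardness of \KMP\ through the third reduction (\thmref{thereduction}) into a lower bound for \HYP, and then propagate that bound to the remaining three special cases using \thmref{hypuckmp} together with the trivial inclusions. My single external hardness input is the fact recalled in the introduction: via Chan's reduction from $k$-Clique with $k=d/2$, the Exponential Time Hypothesis rules out any $n^{o(d)}$ algorithm for \KMP\ in dimension $d$. Everything else is a matter of composing the three reductions in the correct direction while tracking how the dimension changes.

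The core step is to establish the bound for \HYP. I would apply \thmref{thereduction} with $k=d$: since $\GRND{2d}$ in dimension $2d$ coincides with \HYP\ in dimension $2d$, the theorem says that an algorithm for \HYP\ in dimension $2d$ with runtime $\THYP(n,2d)$ yields an algorithm for \KMP\ in dimension $d$ with runtime $\Oh(\THYP(n,2d))$. Suppose for contradiction that \HYP\ admitted an $n^{o(d)}$ algorithm, i.e.\ one whose exponent is a function $f(d)=o(d)$. Instantiating it in dimension $2d$ and feeding it through the reduction gives a \KMP\ algorithm in dimension $d$ running in $\Oh(n^{f(2d)})$. Since $f(2d)/d = 2\,f(2d)/(2d)\to 0$, we also have $f(2d)=o(d)$, so this is an $n^{o(d)}$ algorithm for \KMP, contradicting the ETH lower bound. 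Hence no $n^{o(d)}$ algorithm exists for \HYP.

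It then remains to transfer this bound to the \emph{harder} special cases, which is exactly where the reductions and inclusions must point the right way so that hardness travels away from \HYP. For the grounded family, the trivial inclusions give $\HYP=\GRND{d}\subseteq\GRND{k}$ for every $0\le k\le d$, so any algorithm for $\GRND{k}$ in particular solves \HYP; an $n^{o(d)}$ algorithm for $\GRND{k}$ would therefore contradict the \HYP\ bound. For \UCKMP\ I would invoke \thmref{hypuckmp}: an algorithm for \UCKMP\ with runtime $\TUCKMP(n,d)$ yields an algorithm for \HYP\ with runtime $\Oh(\TUCKMP(n,d))$, so an $n^{o(d)}$ algorithm for \UCKMP\ would again contradict the \HYP\ bound. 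Finally, the inclusion $\UCKMP\subseteq\CKMP$ shows that any algorithm for \CKMP\ solves \UCKMP, carrying the bound one further step to \CKMP.

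The part I expect to be most delicate is the dimension bookkeeping in the central step: the argument works only because $n^{o(d)}$ is robust under replacing $d$ by a constant multiple of $d$. I would therefore be careful to phrase the assumed hardness uniformly over $d$ (a single algorithm whose exponent is $o(d)$ as a function of the dimension) rather than for one fixed dimension, so that the doubling $d\mapsto 2d$ introduced by \thmref{thereduction} is harmless. The remaining work is routine composition, the only genuine conceptual point being to keep the reduction directions straight so that the lower bound flows from the easiest problem \HYP\ outward to all the others.
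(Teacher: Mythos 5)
Your proposal is correct and follows exactly the route the paper intends: apply \thmref{thereduction} with $k=d$ to identify $\GRND{2d}$ in dimension $2d$ with \HYP, use the robustness of $n^{o(d)}$ under the constant-factor dimension increase to transfer the ETH-based lower bound from \KMP\ to \HYP, and then propagate outward via \thmref{hypuckmp}, $\UCKMP\subseteq\CKMP$, and $\HYP=\GRND{d}\subseteq\GRND{k}$. The dimension bookkeeping and uniformity caveat you flag are precisely the points the paper's discussion relies on, so there is nothing to add.
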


We immediately get more precise lower bounds if we assume that no improved algorithm exists for \KMP\ (no algorithm with runtime $\Oh(n^{d/2 - \eps})$). 

\begin{corollary}
  If there is no improved algorithm for \KMP, then there is no algorithm with runtime 
  $\Oh(n^{\lfloor d/2 \rfloor /2 - \eps})$
  for any of \HYP, \UCKMP, \CKMP, and \GRND{k}, for any $\eps > 0$.
\end{corollary}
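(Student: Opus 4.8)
The plan is to derive this corollary entirely from \thmref{thereduction}, using \thmref{hypuckmp} only to funnel all four special cases through the single easiest case \HYP. First I would argue that it suffices to prove the stated lower bound for \HYP\ alone. Indeed, \thmref{hypuckmp} gives $\THYP(n,d) \le \Oh(\TUCKMP(n,d))$, while the trivial inclusions yield $\TUCKMP(n,d) \le \TCKMP(n,d)$ and $\THYP(n,d) = \TGRND{d}(n,d) \le \TGRND{k}(n,d)$ for every $0 \le k \le d$. Hence for each studied special case $S$ we have $\THYP(n,d) \le \Oh(T_S(n,d))$ in the \emph{same} dimension $d$, losing only a constant factor. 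Contrapositively, an algorithm for $S$ in dimension $d$ with runtime $\Oh(n^{\lfloor d/2\rfloor/2 - \eps})$ would give a \HYP\ algorithm with the same exponent, so any lower bound for \HYP\ transfers verbatim to \UCKMP, \CKMP, and \GRND{k}.

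Second, I would establish the lower bound for \HYP\ in even dimension. Suppose for contradiction that \HYP\ in dimension $D = 2d$ admits an algorithm with runtime $\Oh(n^{\lfloor D/2\rfloor/2 - \eps}) = \Oh(n^{d/2 - \eps})$ for some $\eps > 0$. Setting $k = d$ in \thmref{thereduction}, the problem \GRND{2k} in dimension $d+k = 2d$ has $2k = 2d$ grounded coordinates out of $2d$, so it is exactly \HYP\ in dimension $2d$. The reduction then produces an algorithm for \KMP\ in dimension $d$ with runtime $\Oh(\THYP(n,2d)) = \Oh(n^{d/2 - \eps})$, an improved algorithm for \KMP, contradicting the assumption. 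The key point making the exponents line up is that \thmref{thereduction} preserves the number of boxes $n$ and only changes the dimension, so nothing is lost beyond constant factors.

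Third, I would handle the odd dimensions, which is precisely what the floor in the statement encodes. For $D = 2d+1$ we have $\lfloor D/2\rfloor/2 = d/2$. A \HYP\ algorithm in dimension $2d+1$ solves any \HYP\ instance in dimension $2d$ by padding each box with one extra coordinate of unit extent: all boxes still share the origin, the union volume is merely multiplied by one, and $n$ is unchanged. Thus a runtime $\Oh(n^{d/2 - \eps})$ in dimension $2d+1$ again yields \HYP\ in dimension $2d$, hence \KMP\ in dimension $d$, in time $\Oh(n^{d/2 - \eps})$, the same contradiction. Combining the even and odd cases gives the uniform bound $\Oh(n^{\lfloor d/2\rfloor/2 - \eps})$ for \HYP, and by the first step for all special cases.

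The main obstacle is the dimensional bookkeeping rather than any new construction: one must carefully track that the \HYP\ dimension $D$ corresponds to a \KMP\ instance of dimension $d = \lfloor D/2\rfloor$, so that the dimension doubling in \thmref{thereduction} is exactly compensated by the halved target exponent $\lfloor d/2\rfloor/2$, and verify that neither \thmref{thereduction} nor the easiest-case transfers of the first step inflate $n$ (they change the dimension by a constant and cost only constant factors). Once this is checked, the contradiction with the ``no improved \KMP\ algorithm'' hypothesis is immediate.
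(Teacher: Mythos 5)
Your proof is correct and follows essentially the same route the paper intends: apply \thmref{thereduction} with $k=d$ to turn a fast \HYP\ algorithm in dimension $2d$ into an improved \KMP\ algorithm in dimension $d$, and funnel all other special cases through \HYP\ via \thmref{hypuckmp} and the trivial inclusions. Your explicit padding argument for odd dimensions is a detail the paper leaves implicit, but it is the natural (and correct) way to justify the floor in the exponent.
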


This shows the first lower bound for all studied special cases that is larger than $\Omega(n \log n)$. Note that there is, however, a wide gap to the best known upper bound of $\Oh(n^{(d+2)/3})$ for \HYP, \UCKMP, and \CKMP. 

Furthermore, setting $k=1$, \thmref{thereduction} immediately implies that the recent algorithm for \GRND{2} with runtime $\Oh(n^{(d-1)/2} \log^2 n)$~\cite{YildizS12} is optimal (apart from logarithmic factors and if there is no improved algorithm for \KMP).

\begin{corollary} \label{cor:grnd2}
  If there is no improved algorithm for \KMP, then
  there is no algorithm for \GRND{2} with runtime
  $\Oh(n^{(d-1)/2 - \eps})$ for any $\eps > 0$.
\end{corollary}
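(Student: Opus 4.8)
The plan is to prove the contrapositive and invoke \thmref{thereduction} with $k=1$. Suppose, for contradiction, that there were an \emph{improved} algorithm for \GRND{2}, i.e.\ an algorithm solving \GRND{2} in dimension $d$ in time $\Oh(n^{(d-1)/2 - \eps})$ for some fixed $\eps > 0$. I would then feed this algorithm into the reduction of \thmref{thereduction}, which is the only nontrivial ingredient and has already been established.

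Concretely, to solve a general \KMP\ instance in dimension $d$, I would apply \thmref{thereduction} with $k = 1$, which converts the \KMP\ instance in dimension $d$ into a \GRND{2} instance in dimension $d+1$ and runs the assumed \GRND{2} algorithm on it. The heart of the argument is the exponent bookkeeping: the assumed algorithm runs on a $(d+1)$-dimensional \GRND{2} instance in time
\begin{align*}
  \TGRND{2}(n, d+1) = \Oh\bigl(n^{((d+1)-1)/2 - \eps}\bigr) = \Oh\bigl(n^{d/2 - \eps}\bigr).
\end{align*}
By \thmref{thereduction} this yields $\TKMP(n,d) \le \Oh(\TGRND{2}(n,d+1)) = \Oh(n^{d/2-\eps})$, i.e.\ an improved algorithm for \KMP, contradicting the hypothesis that none exists.

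The crucial observation — and really the only non-mechanical step — is that the dimension increase by $k=1$ in \thmref{thereduction} is exactly compensated by the fact that the \GRND{2} exponent $(d-1)/2$ sits one half-unit below the \KMP\ exponent $d/2$: evaluating $(d-1)/2$ at dimension $d+1$ gives precisely $d/2$. Hence any constant saving $\eps$ in the \GRND{2} exponent transfers verbatim to a saving $\eps$ in the \KMP\ exponent, which is what makes the lower bound tight and matches the $\Oh(n^{(d-1)/2}\log^2 n)$ upper bound of~\cite{YildizS12}. I would also note that the quantification over dimensions is preserved, since the reduction shifts the dimension only by the additive constant $1$, so an improved \GRND{2} algorithm across dimensions yields an improved \KMP\ algorithm across dimensions. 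I do not anticipate a genuine obstacle: all the substantive work is carried by \thmref{thereduction}, and this corollary is a one-line substitution together with the exponent arithmetic above.
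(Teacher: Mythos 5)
Your proposal is correct and follows exactly the paper's argument: the paper derives this corollary by setting $k=1$ in \thmref{thereduction} and observing that an $\Oh(n^{(d-1)/2-\eps})$ algorithm for \GRND{2} in dimension $d+1$ runs in time $\Oh(n^{d/2-\eps})$, contradicting the assumption that no improved \KMP\ algorithm exists. Your exponent bookkeeping and the contrapositive framing match the paper's (very brief) proof precisely.
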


To simplify our runtime bounds, in some proofs we use the following technical lemma. Informally, it states that for any \GRND{k} algorithm with runtime $T(n,d)$ we have $T(\Oh(n),d) \le \Oh(T(n,d))$. Note that in this paper we hide by the $\Oh$-notation any functions depending solely on $d$.

\begin{lemma} \label{lem:technical}
  Fix $0 \le k \le d$ and $c > 1$. If there is an algorithm for \GRND{k} with runtime $\TGRND{k}(n,d)$ then there is another algorithm for \GRND{k} with runtime $\TGRND{k}'(n,d)$ satisfying
  \begin{align*}
    \TGRND{k}'(c n,d) \le \Oh(\TGRND{k}(n,d)).
  \end{align*}
\end{lemma}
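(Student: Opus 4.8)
The plan is to look first for the natural self-reduction: build the new algorithm $\mathcal{A}'$ so that on an instance of $cn$ boxes it cuts space into a constant number $c' := \lceil c \rceil = \Oh(1)$ of pairwise interior-disjoint regions, runs the given algorithm $\mathcal{A}$ on each region, and adds the answers. Disjointness is what makes this attractive: if $\Rd = \bigcup_t R_t$ with the $R_t$ sharing no interior, then
\begin{align*}
  \Vol\Bigl(\bigcup_i B_i\Bigr) = \sum_{t=1}^{c'} \Vol\Bigl(\bigcup_i (B_i \cap R_t)\Bigr),
\end{align*}
so no inclusion--exclusion (which would blow the box count up to $n^{\Oh(1)}$) is needed. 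Taking the $R_t$ to be slabs orthogonal to one axis, each clipped box $B_i \cap R_t$ is again an axis-aligned box, and the \GRND{k} structure survives clipping: a grounded coordinate of the form $[0,\cdot]$ stays grounded under intersection with $[0,\cdot]$ and a free coordinate stays free, so each sub-instance is a valid \GRND{k} instance in the same dimension $d$ and $\mathcal{A}$ applies unchanged. If each region received at most $n$ boxes, the total cost would be $c'$ calls to $\mathcal{A}$ plus $\Oh(n)$ bookkeeping, i.e.\ $\Oh(\TGRND{k}(n,d))$.

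The hard part --- and, I expect, the real obstacle --- is that one cannot in general force each region to contain at most $n$ of the clipped boxes. A box that spans a slab is charged to every slab it crosses, so if many boxes are long in the cutting coordinate a single cut does not reduce the per-slab count at all; cutting every coordinate into $c'$ pieces only produces $\Oh(1)$ regions each still met by up to $cn$ boxes. This is unavoidable in the \HYP\ case $k=d$, where all boxes contain the origin and therefore all of them meet the region containing the origin, so no axis-parallel partition decreases the number of boxes there. I read this as a genuine feature of union volume: I do not expect any reduction of a size-$cn$ instance to a constant number of size-$n$ instances in the same dimension with cheap glue, so the lemma cannot be a geometric size-reduction and must instead be a statement about the growth of the runtime function itself.

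Consequently the route I would actually take is to let $\mathcal{A}'$ be $\mathcal{A}$ (dovetailed, if one wants a genuinely different algorithm, with a fixed baseline of runtime $n^{\Oh(d)}$, keeping whichever halts first), and to prove the inequality directly from the shape of the runtimes under consideration. All runtimes appearing here are of the form $\TGRND{k}(n,d) = n^{f(d)} \polylog n$, and for such a function $\TGRND{k}(cn,d) = c^{f(d)}\, n^{f(d)} \polylog(cn) = \Oh(n^{f(d)} \polylog n) = \Oh(\TGRND{k}(n,d))$, since $c$ and $d$ are fixed so the factor $c^{f(d)}$ depends only on $d$ and is hidden by the $\Oh$-notation, while $\polylog(cn) = \Oh(\polylog n)$. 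This is exactly the convenient smoothing $\TGRND{k}(\Oh(n),d) \le \Oh(\TGRND{k}(n,d))$ that the lemma packages, so that the constant-factor increases in the number of boxes produced by the other reductions can be absorbed without comment.
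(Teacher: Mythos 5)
There is a genuine gap. The argument you actually commit to --- ``all runtimes appearing here are of the form $n^{f(d)} \polylog n$, hence $\TGRND{k}(cn,d) = \Oh(\TGRND{k}(n,d))$'' --- proves the lemma only under an extra hypothesis on the shape of the runtime that the statement does not grant you. The lemma quantifies over an \emph{arbitrary} algorithm for \GRND{k} with an \emph{arbitrary} runtime function $\TGRND{k}(n,d)$; that is precisely why it has to be proved rather than ``packaged''. Your dovetailing patch does not repair this: if $\AlgoA'$ runs $\AlgoA$ in parallel with a baseline of runtime $n^{g(d)}$, you get $\TGRND{k}'(cn,d) \le \min(\TGRND{k}(cn,d),\, (cn)^{g(d)})$, and the second term is $\Oh(n^{g(d)})$ but not $\Oh(\TGRND{k}(n,d))$ whenever $\TGRND{k}$ grows slower than the baseline. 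So the conclusion $\TGRND{k}'(cn,d) \le \Oh(\TGRND{k}(n,d))$ does not follow for general $\TGRND{k}$.

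The larger problem is that the geometric self-reduction you dismissed as impossible is exactly the paper's proof, and the obstacle you identify (in \HYP\ every box meets the cell containing the origin, so no axis-parallel partition shrinks that cell's box count) is real but surmountable. The paper splits dimension $i$ at the $(1-\alpha)n$-th and $(1+\alpha)n$-th smallest coordinates $a_i,b_i$ with $\alpha = 1/(3d)$. The two outer slabs receive at most $(1-\alpha)n$ boxes each, since a box meeting an outer slab must have an endpoint there. The middle slab may indeed receive all $n$ boxes --- but all except at most $2\alpha n$ of them become \emph{trivial} in dimension $i$, i.e.\ their clipped extent is exactly $[a_i,b_i]$. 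Iterating over all $d$ dimensions yields $2d$ instances with at most $(1-\alpha)n$ boxes plus one instance in which all but $2d\alpha n = \tfrac23 n$ boxes equal $[a_1,b_1]\times\cdots\times[a_d,b_d]$; if any such box is present the volume of that instance is immediate, and otherwise it too has at most $(1-\alpha)n$ boxes. The slabs are interior-disjoint, so the volumes add (your own observation), giving $\TGRND{k}'(n,d) \le (2d+1)\,\TGRND{k}((1-\alpha)n,d) + \Oh(n)$, and a constant number of rounds boosts the shrinkage factor $1/(1-\alpha)$ to any $c>1$. The missing idea in your proposal is this trivialization/deduplication step for the middle cell; without it you concluded, incorrectly, that no size-reducing self-reduction exists and retreated to an assumption that begs the question.
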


\subsection{Notation and Organization}

A \emph{box} is a set of the form $B = [a_1,b_1]\times \ldots \times [a_d,b_d] \subset \Rd$, $a_i,b_i \in \R$, $a_i \le b_i$. 
A \emph{cube} is a box with all side lengths equal, i.e., $|b_1-a_1| = \ldots = |b_d - a_d|$. 
Moreover, a \emph{\KMP\ instance} is simply a set $M$ of $n$ boxes. In \CKMP\ all these boxes are cubes, and in \UCKMP\ all these boxes are cubes of common side length. In \HYP, all input boxes share the vertex $(0,\ldots,0) \in \Rd$, i.e., each input box is of the form $B = [0,b_1] \times \ldots \times [0,b_d]$. In \GRND{k}, the projection of each input box to the first $k$ dimensions is a \HYP\ box, meaning that each input box is of the form $B = [a_1,b_1] \times \ldots \times [a_d,b_d]$ with $a_1 = \ldots = a_k = 0$.

We write the usual Lebesgue measure of a set $A \subseteq \Rd$ as $\Vol(A)$. For sets $R,A \subseteq \Rd$ we write $\Vol_R(A) := \Vol(R \cap A)$, the volume of $A$ restricted to $R$. 
For a \KMP\ instance $M$ we let $\union(M) := \bigcup_{B \in M} B$. To shorten notation we write $\Vol(M) := \Vol(\union(M))$ and $\Vol_R(M) := \Vol(R \cap \union(M))$. 

In the next section we present the proof of \thmref{hypuckmp}. In \secref{uckmpckmp} we prove \thmref{uckmpckmp}. The proof of \thmref{thereduction} is split into \secref{thereductionOne} and \secref{thereductionTwo}: We first give the reduction for \GRND{2} and then generalize this result to $\GRND{2k}$, $k > 1$. We prove the technical \lemref{technical} in \secref{technical} and close with an extensive list of open problems in \secref{conclusions}.

\section{\HYP\ $\leq$ \UCKMP}
\label{sec:hypuckmp}

In this section we prove \thmref{hypuckmp} by giving a reduction from \HYP\ to \UCKMP. 

Given an instance of \HYP, let $\Delta$ be the largest coordinate of any box. We extend all boxes to cubes of side length $\Delta$, yielding a \UCKMP\ instance. In this process, we make sure that the new parts of each box will not lie in the positive orthant $\Rpd$, but in the other orthants, as depicted in \figref{hypuckmp}. This means that the volume of the newly constructed cubes - restricted to $\Rpd$ - is the same as the volume of the input boxes. To compute this restricted volume, we compute the volume of the constructed \UCKMP\ instance once with and once without an additional cube $C = [0,\Delta]^d$. From this we can infer the volume of the input \HYP\ instance. 

\vspace{-0.2cm}

\begin{figure*}
\begin{center}

\begin{footnotesize}
\begin{tikzpicture}[scale=1.6]

\definecolor{cube}{rgb}{0.85 0.85 0.85};

\path (0.1,1.) coordinate (A);
\path (0.3,0.7) coordinate (B);
\path (0.4,0.5) coordinate (C);
\path (0.75,0.3) coordinate (D);

\fill[cube] (0.,0.) rectangle (A);
\fill[cube] (0.,0.) rectangle (B);
\fill[cube] (0.,0.) rectangle (C);
\fill[cube] (0.,0.) rectangle (D);

\draw (0.,0.) rectangle (A);
\draw (0.,0.) rectangle (B);
\draw (0.,0.) rectangle (C);
\draw (0.,0.) rectangle (D);

\draw[->,thick] (-0.1,0.) -- (1.1,0.);
\draw[->,thick] (0.,-0.1) -- (0.,1.1);

\draw[->,decorate, decoration={snake,amplitude=1.5pt}] (1.25,0.5) -- (1.9,0.5);

\begin{scope}[shift={(3.,0.)}]

\path (0.1,1.) coordinate (A);
\path (0.3,0.7) coordinate (B);
\path (0.4,0.5) coordinate (C);
\path (0.75,0.3) coordinate (D);
\path (1.,1.) coordinate (Om);

\fill[cube] (0.,0.) rectangle (A);
\fill[cube] (0.,0.) rectangle (B);
\fill[cube] (0.,0.) rectangle (C);
\fill[cube] (0.,0.) rectangle (D);

\path (A) ++(-1.,-1.) coordinate (AA);
\path (B) ++(-1.,-1.) coordinate (BB);
\path (C) ++(-1.,-1.) coordinate (CC);
\path (D) ++(-1.,-1.) coordinate (DD);

\draw (AA) rectangle (A);
\draw (BB) rectangle (B);
\draw (CC) rectangle (C);
\draw (DD) rectangle (D);

\draw[dashed,thick] (0,0) rectangle (1,1);

\path (1.,1.) node[below left] {$C$};
\draw[decorate,decoration={brace,raise=6pt,amplitude=4pt}]
    (1,1)--(1,0);
\path (1.2,0.5) node [right] {$\Delta$};

\draw[->,thick] (-0.1,0.) -- (1.1,0.);
\draw[->,thick] (0.,-0.1) -- (0.,1.1);

\end{scope}

\end{tikzpicture}
\end{footnotesize}
\end{center}

\caption{\label{fig:hypuckmp}
Construction in the proof of \thmref{hypuckmp}.
}
\end{figure*}


\section{\UCKMP\ $\ge$ \CKMP}
\label{sec:uckmpckmp}

In this section we prove \thmref{uckmpckmp} by giving a reduction from \CKMP\ to \UCKMP. 

Given a \CKMP\ instance, let $C$ be the cube with smallest side length. We will compute the \emph{contribution} $v$ of $C$, i.e., the volume of space that is contained in~$C$ but no other cube. Having this, we can delete $C$ and recurse on the remaining boxes. Adding up yields the total volume of the input instance.

To compute $v$, we modify each cube such that it becomes a cube of $C$'s side length and its restriction to $C$ stays the same, as depicted in \figref{uckmpckmp}. Applying this construction to all input boxes, we get a \UCKMP\ instance that, inside $C$, looks the same as the input \CKMP\ instance. Computing the volume of this new instance once with and once without $C$ allows to infer $v$.



\vspace{-0.1cm}

\begin{figure*}
\begin{center}

\begin{footnotesize}
\begin{tikzpicture}[scale=1.1]

\definecolor{cube}{rgb}{0.85 0.85 0.85};

\path (0.1,0.4) coordinate (A);
\path (0.3,0.7) coordinate (B);
\path (0.5,0.15) coordinate (C);

\path (0.72,1.2) coordinate (D);
\path (-0.2,1.1) coordinate (E);

\path (A) ++(-1.3,-1.3) coordinate (AA);
\path (B) ++(-1.05,1.05) coordinate (BB);
\path (C) ++(1.1,-1.1) coordinate (CC);

\path (D) ++(1.7,-1.7) coordinate (DD);
\path (E) ++(-1.25,-1.25) coordinate (EE);

\fill[cube] (0,0) rectangle (1,1);

\fill[white] (A) rectangle (AA);
\fill[white] (B) rectangle (BB);
\fill[white] (C) rectangle (CC);
\fill[white] (D) rectangle (DD);
\fill[white] (E) rectangle (EE);

\draw[thick] (0,0) rectangle (1,1);

\draw (A) rectangle (AA);
\draw (B) rectangle (BB);
\draw (C) rectangle (CC);
\draw (D) rectangle (DD);
\draw (E) rectangle (EE);

\path (0.45,0.5) node {$v$};
\path (1.1,1) node[below left] {$C$};

\draw[->,decorate, decoration={snake,amplitude=1.5pt}] (2.7,0.5) -- (3.6,0.5);

\begin{scope}[shift={(4.8,0.)}]

\path (0.1,0.4) coordinate (A);
\path (0.3,0.7) coordinate (B);
\path (0.5,0.15) coordinate (C);

\path (0.72,1.) coordinate (D);

\path (A) ++(-1,-1) coordinate (AA);
\path (B) ++(-1,1) coordinate (BB);
\path (C) ++(1,-1) coordinate (CC);

\path (D) ++(1,-1) coordinate (DD);

\fill[cube] (0,0) rectangle (1,1);

\fill[white] (A) rectangle (AA);
\fill[white] (B) rectangle (BB);
\fill[white] (C) rectangle (CC);
\fill[white] (D) rectangle (DD);

\draw[thick,dashed] (0,0) rectangle (1,1);

\draw (A) rectangle (AA);
\draw (B) rectangle (BB);
\draw (C) rectangle (CC);
\draw (D) rectangle (DD);

\path (0.45,0.5) node {$v$};
\path (1.1,1) node[below left] {$C$};

\end{scope}

\end{tikzpicture}
\end{footnotesize}
\end{center}

\caption{\label{fig:uckmpckmp}
Construction in the proof of \thmref{uckmpckmp}.
}
\end{figure*}

\vspace{-0.9cm}

\section{2-Grounded $\ge$ \KMP}
\label{sec:thereductionOne}

We first show the reduction of \thmref{thereduction} for \GRND{2}, i.e., we show $\TKMP(n,d) \le \Oh(\TGRND{2}(n,d+1))$ by giving a reduction from \KMP\ to \GRND{2}. 
This already implies \corref{grnd2} and lays the foundations for the complete reduction given in the next section. 

We begin by showing the reduction for $d=1$. As a second step we show how to generalize this to larger dimensions.

\subsection{Dimension $d=1$} \label{sec:grndTwoDOne}

We want to give a reduction from \KMP\ in 1 dimension to \GRND{2} in 2 dimensions. Note that the latter is the same as \HYP\ in 2 dimensions.
Let $M$ be an instance of \KMP\ in 1 dimension, i.e., a set of $n$ intervals in $\R$. 
We will reduce the computation of $\Vol(M)$ to two instances of \GRND{2}.

Denote by $x_1 < \ldots < x_{m}$ the endpoints of all intervals in $M$ (if all endpoints are distinct then $m=2n$). We can assume that $x_1 = 0$ after translation. Consider the boxes 
\begin{align*}
  A_i := [m-i-1,m-i] \times [x_i,x_{i+1}]
\end{align*}
in $\R^2$ for $1 \le i \le m-1$, as depicted in \figref{grndkmp}. Denote the union of these boxes by~$A$. Note that the volume of box $A_i$ is the same as the length of the interval $[x_i,x_{i+1}]$. This means that we took the chain of intervals $\{[x_i,x_{i+1}]\}$ and made it into a staircase of boxes $\{A_i\}$, where each box has the same volume as the corresponding interval.

\vspace{-0.3cm}

\begin{figure*}
\begin{center}

\begin{footnotesize}
\begin{tikzpicture}[scale=3.]

\draw (0,0) -- (1,0);

\foreach \i / \x in {1/0.,2/0.15,3/0.35,4/0.6,5/0.72,6/1.} {
  \draw (\x,0.03) -- (\x,-0.03) node[below] {$x_{\i}$};
}

\draw[thick] (0.17,0.04) -- (0.15,0.04) -- (0.15,-0.04) -- (0.17,-0.04);
\draw[thick] (0.58,0.04) -- (0.6,0.04) -- (0.6,-0.04) -- (0.58,-0.04);
\draw[thick] (0.15,0) -- (0.6,0);
\path[thick] (0.15,0.03) -- (0.6,0.03) node[midway,above] {$I$};

\draw[->,decorate, decoration={snake,amplitude=1.5pt}] (1.2,0.) -- (1.5,0.);

\begin{scope}[shift={(1.9,-0.5)}]

\foreach \x / \y / \j / \k in {0./0.15/0.8/1., 0.15/0.35/0.6/0.8, 0.35/0.6/0.4/0.6, 0.6/0.72/0.2/0.4, 0.72/1./0./0.2} {
  \fill[dark] (\j,\x) rectangle (\k,\y);
}

\foreach \x / \y / \j / \k in {0./0.15/0.8/1., 0.15/0.35/0.6/0.8, 0.35/0.6/0.4/0.6, 0.6/0.72/0.2/0.4, 0.72/1./0./0.2} {
  \fill[light] (0,0) rectangle (\k,\x);
}

\draw[->] (0,-0.02) -- (0,1.05);
\draw[->] (-0.02,0) -- (1.05,0);

\foreach \i / \x / \y / \j in {2/0.15/0.2/1, 3/0.35/0.4/2, 4/0.6/0.6/3, 5/0.72/0.8/4, 6/1./1./5} {
  \draw (0.03,\x) -- (-0.03,\x) node[left] {$x_{\i}$};
}
\draw (0.03,0) -- (-0.03,0) node[left] {$0=x_1$};

\foreach \i / \x / \y / \j in {1/0./0./0, 2/0.15/0.2/1, 3/0.35/0.4/2, 4/0.6/0.6/3, 5/0.72/0.8/4, 6/1./1./5} {
  \draw (\y,0.03) -- (\y,-0.03) node[below] {$\j$};
}

\foreach \x / \y / \j / \k in {0./0.15/0.8/1., 0.15/0.35/0.6/0.8, 0.35/0.6/0.4/0.6, 0.6/0.72/0.2/0.4, 0.72/1./0./0.2} {
  \draw (\j,\x) rectangle (\k,\y);
}

\draw[thick] (0,0) rectangle (0.8,0.6) node[above right] {$C_I$};

\path (0.9,0.07) node {$A_1$};
\path (0.7,0.25) node {$A_2$};
\path (0.5,0.47) node {$A_3$};
\path (0.3,0.66) node {$A_4$};
\path (0.1,0.85) node {$A_5$};

\end{scope}

\end{tikzpicture}
\end{footnotesize}
\end{center}

\caption{\label{fig:grndkmp}
The left hand side depicts all endpoints $0=x_1 \le \ldots \le x_6$ of a 1-dimensional \KMP\ instance. An input interval $I$ is indicated. The right hand side shows the result of our transformation. Each interval $[x_i,x_{i+1}]$ to the left corresponds to a box $A_i$ to the right. The interval $I$ gets mapped to the box $C_I$. The shaded regions depict the set $A$ ({\protect\tikz \protect\fill[dark] (0,0) rectangle (1.5ex,1.6ex);}, the union of all $A_i$) and the set $T_0$ ({\protect\tikz \protect\fill[light] (0,0) rectangle (1.5ex,1.6ex);}). 
}
\end{figure*}

Now consider an interval $I = [x_j,x_k] \in M$. We construct the box 
\begin{align*}
  C_I := [0,m-j] \times [0,x_k],
\end{align*}
also shown in \figref{grndkmp}. Then $C_I$ contains the boxes $A_i$ with $j \le i < k$ and (its interior) has no common intersection with any other box $A_i$. This is easily seen as $A_i \subseteq C_I$ iff $m-i \le m-j$ and $x_{i+1} \le x_k$. Hence, for any interval $I \in M$ we constructed a box $C_I$ that contains exactly those boxes $A_i$ whose corresponding interval $[x_i,x_{i+1}]$ is contained in $I$, or in other words
\begin{align*}
  [x_i,x_{i+1}] \subseteq I &\Leftrightarrow A_i \subseteq C_I,  \\
  \Vol([x_i,x_{i+1}] \cap I) &= \Vol(A_i \cap C_I).  
\end{align*}

From these properties it follows that the volume of $C_I$ restricted to $A$ is the same as the length of~$I$, i.e.,
\begin{align*}
  \Vol_A(C_I) = \Vol(I).
\end{align*}
Furthermore, considering the whole set $M$ of intervals, the interval $[x_i,x_{i+1}]$ is contained in some interval in~$M$ iff the box $A_i$ is contained in some box in $C_M := \{C_I \mid I \in M \}$. This yields
\begin{align*}
  \Vol(M) = \Vol_A(C_M).
\end{align*}

It remains to reduce the computation of $\Vol_A(C_M)$ to two \GRND{2} instances. For this we consider 
\begin{align*}
  T_0 := \bigcup_{1 \le i \le m} C_{[x_i,x_i]}.
\end{align*}
Informally speaking, $T_0$ consists of all points ``below'' $A$, as depicted in \figref{grndkmp}. Note that no set $A_j$ is contained in $T_0$. Moreover, we consider the set $T_1 := T_0 \cup A$. Observe that we can write
\begin{align*}
  T_1 = \bigcup_{1 \le i \le m-1} C_{[x_i,x_{i+1}]},
\end{align*}
since $A_i \subseteq C_{[x_i,x_{i+1}]}$. 
Note that both sets $T_0$ and $T_1$ are unions of $\Oh(n)$ \GRND{2} boxes. Informally, $T_0$ is the maximum \GRND{2} instance that has $\Vol_A(T_0) = 0$, and $T_1$ is the minimum \GRND{2} instance with $\Vol_A(T_1) = \Vol(A)$. 
Now, we can compute $\Vol_A(C_M)$ as follows.
\begin{lemma}
  In the above situation we have
  \begin{align*}
    \Vol_A(C_M) = \Vol(A) + \Vol(T_0 \cup \union(C_M)) - \Vol(T_1 \cup \union(C_M)).
  \end{align*}
\end{lemma}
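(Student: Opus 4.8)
The plan is to reduce the whole statement to a one-line set-algebra identity plus a single geometric observation about the staircase $A$. Throughout I write $U := \union(C_M)$ for brevity, so the target is $\Vol_A(C_M) = \Vol(A \cap U)$, and I want to recover this quantity from the two full-space volumes $\Vol(T_0 \cup U)$ and $\Vol(T_1 \cup U)$ (both computable by a \GRND{2} oracle, since $T_0, T_1$ and all $C_I$ are \GRND{2} boxes).

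First I would exploit the identity $T_1 = T_0 \cup A$, established just before the lemma. It gives $T_1 \cup U = (T_0 \cup U) \cup A$, so passing from $T_0 \cup U$ to $T_1 \cup U$ adds exactly the part of $A$ lying outside $T_0 \cup U$:
\begin{align*}
  \Vol(T_1 \cup U) - \Vol(T_0 \cup U) = \Vol\bigl(A \setminus (T_0 \cup U)\bigr).
\end{align*}
Substituting this into the right-hand side of the claimed identity and using $\Vol(A) - \Vol(A \setminus S) = \Vol(A \cap S)$ with $S := T_0 \cup U$, the right-hand side collapses to $\Vol\bigl(A \cap (T_0 \cup U)\bigr)$. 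This part is pure bookkeeping and should take only a couple of lines.

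It then remains to show $\Vol\bigl(A \cap (T_0 \cup U)\bigr) = \Vol(A \cap U)$, i.e.\ that intersecting $A$ with the extra region $T_0$ contributes nothing. This is where the real content sits: I would prove the nullset claim $\Vol_A(T_0) = \Vol(A \cap T_0) = 0$ directly from the definitions. With $A = \bigcup_j A_j$, $T_0 = \bigcup_i C_{[x_i,x_i]}$, $A_j = [m-j-1,m-j] \times [x_j,x_{j+1}]$ and $C_{[x_i,x_i]} = [0,m-i] \times [0,x_i]$, I check that $A_j$ and $C_{[x_i,x_i]}$ never overlap in positive measure: their first coordinates overlap with positive length only when $i \le j$ (from $m-j-1 < m-i$), while their second coordinates overlap with positive length only when $i > j$ (from $x_j < x_i$), and these two conditions are mutually exclusive. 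Hence each $A_j \cap C_{[x_i,x_i]}$ is a lower-dimensional nullset, and the finite union $A \cap T_0$ has measure zero. Consequently
\begin{align*}
  \Vol\bigl(A \cap (T_0 \cup U)\bigr) = \Vol\bigl((A \cap U) \cup (A \cap T_0)\bigr) = \Vol(A \cap U) = \Vol_A(C_M),
\end{align*}
which completes the proof.

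The inclusion--exclusion bookkeeping is routine; the step I would be most careful with is the nullset claim $\Vol_A(T_0) = 0$, since it is exactly the geometric reason the staircase construction works — $A$ lies immediately ``above'' its below-region $T_0$, and the two meet only along shared boundaries. I would state and verify this mutual-exclusion of the overlap conditions explicitly rather than relying on the informal remark preceding the lemma.
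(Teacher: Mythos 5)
Your proof is correct and follows essentially the same route as the paper: both arguments reduce the claimed identity to the set-algebra fact $\Vol(T_1 \cup U) - \Vol(T_0 \cup U) = \Vol(A \setminus U)$ with $U = \union(C_M)$, which rests on $A$ and $T_0$ overlapping only in a set of measure zero. The paper simply writes $A = T_1 \setminus T_0$ and leaves that disjointness implicit in the construction, whereas you verify $\Vol(A \cap T_0) = 0$ explicitly from the coordinates of $A_j$ and $C_{[x_i,x_i]}$ (the mutually exclusive conditions $i \le j$ and $i > j$) --- a worthwhile bit of added rigor, but not a different proof.
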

\begin{proof}
  Set $U := \union(C_M)$.
  Using $T_0 \subseteq T_1$ and $A = T_1 \setminus T_0$ in a sequence of simple transformations, we get
  \begin{align*}
    \Vol(T_1 \cup U) - \Vol(T_0 \cup U)
    &= \Vol((T_1 \cup U) \setminus (T_0 \cup U))  \\
    &= \Vol((T_1 \setminus T_0) \setminus U)  \\
    &= \Vol(A \setminus U)  \\
    &= \Vol(A) - \Vol(A \cap U)  \\
    &= \Vol(A) - \Vol_A(C_M),
  \end{align*}
  which proves the claim. \qed
\end{proof}
Note that $\Vol(A) = \sum_i \Vol(A_i) = \sum_i |x_{i+1}-x_i| = |x_m - x_1|$ is trivial.
Also note that both sets $T_0$ and $T_1$ are the union of $\Oh(n)$ \GRND{2} boxes, so that $\Vol(T_b \cup \union(C_M))$ can be seen as a \GRND{2} instance of size $\Oh(n)$, for both $b \in \{0,1\}$. Hence, we reduced the computation of the input instance's volume $\Vol(M)$ to $\Vol_A(C_M)$ and further to the \GRND{2} instances $\Vol(T_0 \cup \union(C_M))$ and $\Vol(T_1 \cup \union(C_M))$.

As we have to sort the given intervals first, we get
\begin{align*}
  \TKMP(n,1) \le \Oh(\TGRND{2}(\Oh(n),2) + n \log n).
\end{align*}
Note that this inequality alone gives no new information, as already Klee~\cite{Klee77} showed that $\TKMP(n,1) \le \Oh(n \log n)$. However, we get interesting results when we generalize this reduction to higher dimensions in the next section.

\subsection{Larger Dimensions}

In this section we show how the reduction from the last section carries over to larger dimensions, yielding a reduction from \KMP\ in $d$ dimensions to \GRND{2} in $d+1$ dimensions. This implies $\TKMP(n,d) \le \Oh(\TGRND{2}(n,d+1))$.  

Assume we are given a \KMP\ instance $M$ in dimension $d$. The idea is that we use the dimension doubling reduction from the last section on the first dimension and leave all other dimensions untouched. More precisely, for a box $B \in M$ let $\pi_1(B)$ be its projection onto the first dimension and let $\pi_*(B)$ be its projection onto the last $d-1$ dimensions, so that $B = \pi_1(B) \times \pi_*(B)$. Now follow the reduction from the last section on the instance $M' := \{\pi_1(B) \mid B \in M\}$. This yields sets $A$, $T_0$, $T_1$, and a box $C_I$ for each $I \in M'$.

We set $C_B := C_{\pi_1(B)} \times \pi_*(B)$ and $C_M = \{C_B \mid B \in M\}$. A possible way of generalizing $A$ would be to set $A'' := A \times \R^{d-1}$. Then we would be interested in $\Vol_{A''}(C_M)$, which can be seen to be exactly $\Vol(M)$. This definition of $A''$ is, however, not simple enough, as it is not a difference of \GRND{2} instances (unlike $A = T_1 \setminus T_0$). To give a different definition, assume (after translation) that all coordinates of the input instance are non-negative and let $\Delta$ be the maximal coordinate in any dimension. We set $A' := A \times [0,\Delta]^{d-1}$ and still get the same volume $\Vol_{A'}(C_M) = \Vol(M)$. This allows to generalize $T_0$ and $T_1$ to $T_0' := T_0 \times [0,\Delta]^{d-1}$ and $T_1' := T_1 \times [0,\Delta]^{d-1}$, while still having
\begin{align*}
  \Vol_{A'}(C_M) = \Vol(A') + \Vol(T_0' \cup \union(C_M)) - \Vol(T_1' \cup \union(C_M)).
\end{align*}
Note that $T_0'$ and $T_1'$ are also a union of $\Oh(n)$ \GRND{2} boxes, so a volume such as $\Vol(T_0' \cup \union(C_M))$ can be seen as a \GRND{2} instance.
This completes the reduction and yields the time bound
\begin{align*}
  \TKMP(n,d) \le \Oh(\TGRND{2}(\Oh(n),d+1) + n \log n).
\end{align*}
Using the lower bound $\Omega(n \log n)$ of \corref{nlogn} we can hide the additional $n \log n$ in the first summand. 
Moreover, first using the technical \lemref{technical} we can finally simplify this to the statement of \corref{grnd2},
\begin{align*}
  \TKMP(n,d) \le \Oh(\TGRND{2}(n,d+1)).
\end{align*}

\section{2k-Grounded $\ge$ \KMP}
\label{sec:thereductionTwo}

In this section we prove \thmref{thereduction} in its full generality, building on the redcutions from the last section. Recall that we want to give a reduction from \KMP\ in dimension $d$ to \GRND{2k} in dimension $d+k$. 

Let $M$ be a \KMP\ instance. After translation we can assume that in every dimension the minimal coordinate among all boxes in $M$ is 0. Denoting the largest coordinate of any box by $\Delta$ we thus have $B \subseteq [0,\Delta]^d = \Omega^d$ for all $B \in M$, where $\Omega := [0,\Delta]$.

The first steps of generalizing the reduction from the last section to the case $k > 1$ are straight forward. We want to use the dimension doubling reduction from \secref{grndTwoDOne} on each one of the first $k$ dimensions. For any box $B \in \R^d$ denote its projection onto the $i$-th dimension by $\pi_i(B)$, $1 \le i \le k$, and its projection onto dimensions $k+1,\ldots,d$ by $\pi_*(B)$. We use the reduction from \secref{grndTwoDOne} on each dimension $1 \le i \le k$, i.e., on each instance $M^{(i)} := \{\pi_i(B) \mid B \in M\}$, yielding sets $A^{(i)}, T_0^{(i)}, T_1^{(i)}$, and a box $C_I^{(i)}$ for each $I \in M^{(i)}$.
We assume that all these sets are contained in $[0,\Delta]^2 = \Omega^2$, meaning that all coordinates are upper bounded by $\Delta$ (this holds after possibly increasing the $\Delta$ we chose before).

For a box $B \in M$ we now define 
\begin{align*}
  C_B := C^{(1)}_{\pi_1(B)} \times \ldots \times C^{(k)}_{\pi_k(B)} \times \pi_*(B).
\end{align*}
This is a box in $\R^{d+k}$, it is even a \GRND{2k} box, as its projection onto the first $2k$ coordinates has the vertex $(0,\ldots,0)$. Set $C_M := \{C_B \mid B \in M\}$. 
As shown by the following lemma, we want to determine the volume $\Vol_{A}(C_M)$ in
\begin{align*}
  A := A^{(1)}\times \ldots \times A^{(k)} \times [0,\Delta]^{d-k}.
\end{align*}

\begin{lemma} \label{lem:VolACM}
  In the above situation we have
  \begin{align*}
    \Vol(M) = \Vol_A(C_M).
  \end{align*}
\end{lemma}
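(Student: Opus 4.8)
The plan is to reduce the $k$-dimensional claim to the one-dimensional reduction of \secref{grndTwoDOne}, applied coordinate by coordinate, and then combine the $k$ factors. Recall that in the 1-dimensional construction we established, for each dimension $i$ with $1 \le i \le k$, the pointwise-looking equivalences $[x_j^{(i)}, x_{j+1}^{(i)}] \subseteq I \Leftrightarrow A_j^{(i)} \subseteq C_I^{(i)}$ for intervals $I$ of $M^{(i)}$. The first thing I would do is make the key structural fact explicit: for a single box $B \in M$ and a single interval $A_{j}^{(i)}$ of the staircase in dimension $i$, we have $A_j^{(i)} \subseteq C^{(i)}_{\pi_i(B)}$ if and only if the corresponding subinterval of $\pi_i(B)$'s grid lies inside $\pi_i(B)$. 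Crucially, the staircase boxes $A_j^{(i)}$ are pairwise disjoint (they live on disjoint horizontal strips $[m-j-1,m-j]$ in the doubling coordinate), so a point of $A$ determines a unique staircase cell in each of the $k$ doubled dimensions.

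With that in hand, I would prove the identity by a pointwise (indicator-function) argument rather than by set algebra. Fix a point and decompose $A = A^{(1)} \times \cdots \times A^{(k)} \times [0,\Delta]^{d-k}$: a point of $A$ selects, in each dimension $i \le k$, a unique staircase cell indexed by some $j_i$, corresponding to a grid subinterval of $M^{(i)}$; in the last $d-k$ coordinates it is just a point of $[0,\Delta]^{d-k}$. The plan is to show that such a point lies in $\union(C_M)$ exactly when the associated "grid box" in $\R^d$ is covered by some input box $B \in M$. Because $C_B$ is a product $C^{(1)}_{\pi_1(B)} \times \cdots \times C^{(k)}_{\pi_k(B)} \times \pi_*(B)$, and the staircase boxes are disjoint, the point lies in $C_B$ iff in every dimension $i \le k$ the selected cell satisfies $A_{j_i}^{(i)} \subseteq C^{(i)}_{\pi_i(B)}$ (equivalently the $i$-th grid subinterval is inside $\pi_i(B)$) and simultaneously the trailing coordinates lie in $\pi_*(B)$. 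By the $k$ coordinatewise equivalences, this is precisely the statement that the full $d$-dimensional grid box is contained in $B$.

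Summing (integrating) over $A$ then gives the volume identity. For each choice of staircase cells $(j_1,\dots,j_k)$, the restricted volume contributed equals the product of the $k$ interval lengths $|x_{j_i+1}^{(i)} - x_{j_i}^{(i)}|$ times the $(d-k)$-dimensional measure of the set of trailing points covered by some qualifying $B$; and by the construction each $A_{j_i}^{(i)}$ has volume exactly equal to its interval length. Because the grid over the first $k$ dimensions refines the endpoints of every box in $M$, summing these products over all cells reassembles exactly $\Vol(\union(M))$, whence $\Vol_A(C_M) = \Vol(M)$. I would write this as
\begin{align*}
  \Vol_A(C_M) = \sum_{j_1,\dots,j_k} \Vol(A^{(1)}_{j_1}) \cdots \Vol(A^{(k)}_{j_k}) \cdot \Vol_{[0,\Delta]^{d-k}}\bigl(\{\,\pi_*(B) : B \supseteq G_{j_1,\dots,j_k}\,\}\bigr),
\end{align*}
where $G_{j_1,\dots,j_k}$ denotes the $d$-dimensional grid box indexed by the chosen cells, and observe the right side equals $\Vol(M)$.

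The main obstacle, and the step requiring the most care, is justifying the clean product decomposition of the containment condition across the $k$ doubled dimensions simultaneously. For the single-dimension case this was the content of \secref{grndTwoDOne}, but here I must argue that the equivalences compose: that is, I must rule out any "diagonal" interaction in which a point could lie in $C_B$ without its selected cell being contained in $C^{(i)}_{\pi_i(B)}$ in every single dimension. This follows from the product structure of $C_B$ and the disjointness of the staircase cells, but the bookkeeping—tracking that a point of $A$ picks out a well-defined cell in each of the $k$ coordinates and that the product $C_B$ factorizes the membership test accordingly—is where I would be most careful to avoid an off-by-one or a boundary (measure-zero) slip. Since all the sets involved are finite unions of boxes, boundary effects have measure zero and can be safely ignored, which I would note explicitly.
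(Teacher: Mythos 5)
Your proof is correct and follows essentially the same route as the paper's: both rest on the coordinatewise equivalence $[x_j^{(i)},x_{j+1}^{(i)}]\subseteq I \Leftrightarrow A_j^{(i)}\subseteq C_I^{(i)}$, the product structure of $C_B$, and the fact that each staircase cell is either wholly contained in a box $C_I^{(i)}$ or meets its interior in measure zero. The only cosmetic difference is that the paper grids all $d$ dimensions into cells $E_{j_1,\ldots,j_d}$ and matches them with $(d+k)$-dimensional cells $E'_{j_1,\ldots,j_d}$ of equal volume, whereas you grid only the first $k$ coordinates and handle the trailing $d-k$ coordinates by a Fubini-type integration; both yield the same identity.
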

\begin{proof}
  Denote by $x_1^{(i)} < \ldots < x_{m_i}^{(i)}$ the coordinates of all boxes in $M$ in the $i$-th dimension. We can express $\Vol(M)$ in terms of the boxes 
  \begin{align*}
    E_{j_1,\ldots,j_d} := [x_{j_1}^{(1)}, x_{j_1+1}^{(1)}] \times \ldots \times [x_{j_d}^{(d)}, x_{j_d+1}^{(d)}],
  \end{align*}
  for $1 \le j_i < m_i$. Since each such box is either completely included in some box in $M$ or does not contribute to $\Vol(M)$, we have 
  \begin{align*}
    \Vol(M) = \sum_{j_1,\ldots,j_d} [ E_{j_1,\ldots,j_d} \subseteq \union(M) ] \cdot \Vol(E_{j_1,\ldots,j_d}),
  \end{align*}
  where $[X]$ is 1 if $X$ is true, and 0 otherwise.
  
  Recall from the reduction in \secref{grndTwoDOne} that $A^{(i)} = A_1^{(i)} \cup \ldots \cup A^{(i)}_{m_i - 1}$, and there is a one-to-one correspondence between intervals $[x_j^{(i)},x_{j+1}^{(i)}]$ and 2-dimensional boxes $A_j^{(i)}$, in particular both have the same volume. This carries over to a one-to-one correspondence between $d$-dimensional boxes $E_{j_1,\ldots,j_d}$ and $(d+k)$-dimensional boxes 
  \begin{align*}
    E'_{j_1,\ldots,j_d} := A_{j_1}^{(1)} \times \ldots \times A_{j_k}^{(k)} \times [x_{j_{k+1}}^{(k+1)}, x_{j_{k+1}+1}^{(k+1)}] \times \ldots \times [x_{j_d}^{(d)}, x_{j_d+1}^{(d)}],
  \end{align*}
  in particular both have the same volume.
  
  Additionally, recall that an interval $I \in M^{(i)}$ includes $[x_j^{(i)},x_{j+1}^{(i)}]$ if and only if the 2-dimensional box $C_I^{(i)}$ contains $A_j^{(i)}$. If $I$ does not include $[x_j^{(i)},x_{j+1}^{(i)}]$, then $\Vol(I \cap [x_j^{(i)},x_{j+1}^{(i)}]) = 0$, and we also have $\Vol(C_I^{(i)} \cap A_j^{(i)}) = 0$.
  Hence, we have for any $B \in M$ that $E_{j_1,\ldots,j_d} \subseteq B$ if and only if $E'_{j_1,\ldots,j_d} \subseteq C_B$, which implies that $E_{j_1,\ldots,j_d} \subseteq \union(M)$ if and only if $E'_{j_1,\ldots,j_d} \subseteq \union(C_M)$. Furthermore, $E'_{j_1,\ldots,j_d}$ is either included in $\union(C_M)$ or does not contribute to $\Vol(C_M)$.
  In total, we get
  \begin{align*}
    \Vol(M) &= \sum_{j_1,\ldots,j_d} [ E_{j_1,\ldots,j_d} \subseteq \union(M) ] \cdot \Vol(E_{j_1,\ldots,j_d})  \\
    &= \sum_{j_1,\ldots,j_d} [ E'_{j_1,\ldots,j_d} \subseteq \union(C_M) ] \cdot \Vol(E'_{j_1,\ldots,j_d}) = \Vol_A(C_M),
  \end{align*}
  which completes the proof. \qed
\end{proof}

Unfortunately, the set $A$ is not simply a difference of two \GRND{2k} instances. 
Thus, the hard part is to reduce the computation of $\Vol_{A}(C_M)$ to \GRND{2k} instances, which we will do in the remainder of this section.
For $1 \le i \le k$ and $b \in \{0,1\}$ we set
\begin{align*}
  \tilde{T}_b^{(i)} := \Omega^{2(i-1)} \times T_b^{(i)} \times \Omega^{d+k-2i}.
\end{align*}
This set in $\Omega^{d+k}$ consists of all points $x$ whose projection to dimensions $2i-1$ and $2i$ is contained in $T_b^{(i)}$.
Note that each set $\tilde{T}_b^{(i)}$ can be written as the union of $\Oh(n)$ \GRND{2k} boxes, since $T_b^{(i)}$ is the union of $\ell = \Oh(n)$ \GRND{2} boxes in $\R^2$, i.e., $T_b^{(i)} = \bigcup_{j=1}^{\ell} C_j$, so that we may write 
$\tilde{T}_b^{(i)} = \bigcup_{j=1}^{\ell} \Omega^{2(i-1)} \times C_j \times \Omega^{d-k-2i}$. Thus, we can use an algorithm for \GRND{2k} to compute any volume of the form $\Vol(\tilde{T}_b^{(i)} \cup V)$, where~$V$ is a union of $\Oh(n)$ \GRND{2k} boxes. 

Furthermore, define for $S \subseteq [k]$
\begin{align*}
  D_S := \bigg( \bigcup_{i \in S} \tilde{T}_1^{(i)} \bigg) \cup \bigcup_{i \in [k] \setminus S} \tilde{T}_0^{(i)}.
\end{align*}
Note that $D_S \subseteq D_{S'}$ holds for $S \subseteq S'$. We can express $A$ using the sets $D_S$ as shown by the following lemma.

\begin{lemma} \label{lem:AcapDS}
  In the above situation we have
  \begin{align*}
    A = \bigcap_{1 \le i \le k} D_{\{i\}} \setminus D_\emptyset.
  \end{align*}
\end{lemma}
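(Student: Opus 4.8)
The plan is to prove the set identity pointwise, by double inclusion, after making the cylinder structure of the sets explicit. The natural decomposition is to split the ambient cube $\Omega^{d+k}$ into the $k$ coordinate pairs $(2i-1,2i)$, $1 \le i \le k$, together with a final block of $d-k$ coordinates; I write a point as $x = (y^{(1)},\ldots,y^{(k)},z)$ with $y^{(i)} \in \Omega^2$ the projection onto the $i$-th pair and $z \in \Omega^{d-k}$. In this notation the relevant sets become transparent: $x \in \tilde{T}_b^{(i)}$ if and only if $y^{(i)} \in T_b^{(i)}$, with no constraint on the other coordinates; and $x \in A$ if and only if $y^{(i)} \in A^{(i)}$ for every $i$, the final $d-k$ coordinates being unconstrained since $[0,\Delta]^{d-k} = \Omega^{d-k}$.

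The only input I need from the one-dimensional reduction of \secref{grndTwoDOne} is that $A^{(i)} = T_1^{(i)} \setminus T_0^{(i)}$ with $T_0^{(i)} \subseteq T_1^{(i)}$. Translating the definitions, $x \in D_\emptyset$ means $y^{(j)} \in T_0^{(j)}$ for some $j$, while $x \in D_{\{i\}}$ means $y^{(i)} \in T_1^{(i)}$ or $y^{(j)} \in T_0^{(j)}$ for some $j \ne i$.

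The key observation — the step that makes the intersection collapse correctly — is what happens once we subtract $D_\emptyset$. If $x \notin D_\emptyset$, then $y^{(j)} \notin T_0^{(j)}$ for every $j$, so the disjunct ``$y^{(j)} \in T_0^{(j)}$ for some $j \ne i$'' in the condition $x \in D_{\{i\}}$ is always false; hence $x \in D_{\{i\}}$ simplifies to $y^{(i)} \in T_1^{(i)}$. Therefore $x \in \bigcap_{i} D_{\{i\}} \setminus D_\emptyset$ if and only if $y^{(i)} \in T_1^{(i)}$ and $y^{(i)} \notin T_0^{(i)}$ for every $i$, i.e. $y^{(i)} \in T_1^{(i)} \setminus T_0^{(i)} = A^{(i)}$ for every $i$, which is exactly the condition $x \in A$. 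Both inclusions drop out of this single equivalence.

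I do not anticipate a genuine obstacle; the argument is pure set bookkeeping once the cylinder decomposition is fixed. The one point requiring care is precisely the collapse just described: the unions defining $D_{\{i\}}$ and $D_\emptyset$ overlap, and it is only after removing $D_\emptyset$ that each $D_{\{i\}}$ contributes nothing but the constraint $y^{(i)} \in T_1^{(i)}$. It is also worth recording the monotonicity $D_S \subseteq D_{S'}$ for $S \subseteq S'$ (which follows from $T_0^{(i)} \subseteq T_1^{(i)}$), as it will presumably be convenient when reducing $\Vol_A(C_M)$ to \GRND{2k} instances, even though it is not strictly needed for this lemma.
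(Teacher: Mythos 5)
Your proposal is correct and follows essentially the same route as the paper: both arguments rest on the observation that, away from $D_\emptyset$, each $D_{\{i\}}$ reduces to the single constraint that the $i$-th coordinate pair lies in $T_1^{(i)}$ (the paper phrases this as $D_{\{i\}} = \tilde{T}_1^{(i)} \cup D_\emptyset$, using $\tilde{T}_0^{(i)} \subseteq \tilde{T}_1^{(i)}$), and then conclude via the pointwise characterization $A^{(i)} = T_1^{(i)} \setminus T_0^{(i)}$ applied to each coordinate pair. The only cosmetic difference is that the paper performs one set-algebraic simplification before going pointwise, whereas you argue pointwise throughout; the substance is identical.
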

\begin{proof}
  We have $D_\emptyset = \bigcup_{i \in [k]} \tilde{T}_0^{(i)}$ and $\tilde{T}_0^{(i)} \subseteq \tilde{T}_1^{(i)}$ for all $i$, implying
  \begin{align*}
    D_{\{i\}} = \tilde{T}_1^{(i)} \cup D_\emptyset.
  \end{align*}
  This yields
  \begin{align*}
    \bigcap_{1 \le i \le k} D_{\{i\}} \setminus D_\emptyset
    &= \bigg( \bigcap_{1 \le i \le k} \tilde{T}_1^{(i)} \bigg) \setminus \bigcup_{i \in [k]} \tilde{T}_0^{(i)}.
  \end{align*}
  A point $x = (x_1,\ldots,x_{d+k}) \in \R^{d+k}$ is in the set on the right hand side if and only if it has the following three properties:
  \begin{itemize}
    \item $x_i \in [0,\Delta]$ for all $1\le i \le d+k$,
    \item $(x_{2i-1},x_{2i})$ lies in $T_1^{(i)}$ for all $1 \le i \le k$,
    \item $(x_{2i-1},x_{2i})$ does not lie in $T_0^{(i)}$ for all $1 \le i \le k$.
  \end{itemize}
  Since $A^{(i)} = T_1^{(i)} \setminus T_0^{(i)}$, this description captures exactly $A^{(1)}\times \ldots \times A^{(k)} \times [0,\Delta]^{d-k} = A$, finishing the proof. \qed
\end{proof}

Moreover, each $D_S$ can be written as the union of $\Oh(n)$ \GRND{2k} instances, since the same was true for the sets $\tilde{T}_b^{(i)}$. Hence, we can use an algorithm for \GRND{2k} to compute the volume
\begin{align*}
  H_S := \Vol(D_S \cup \union(C_M)).
\end{align*}
Next we show that we can compute $\Vol_A(C_M)$ from the $H_S$ by an interesting usage of the inclusion-exclusion principle.

\begin{lemma} \label{lem:last}
  In the above situation we have
  \begin{align*}
    \Vol_A(C_M) = \Vol(A) + \sum_{S \subseteq [k]} (-1)^{|S|} H_S.
  \end{align*}
\end{lemma}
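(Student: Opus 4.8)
The plan is to reduce the whole statement to a single pointwise (indicator-function) computation, exploiting that the alternating sum $\sum_{S \subseteq [k]} (-1)^{|S|}$ vanishes for $k \ge 1$. I would first introduce the shorthands $U := \union(C_M)$, $Q_i := \tilde{T}_1^{(i)}$, and $W := D_\emptyset \cup U$. The key structural observation, already recorded in the proof of \lemref{AcapDS}, is that $\tilde{T}_0^{(i)} \subseteq \tilde{T}_1^{(i)}$ gives $D_S = D_\emptyset \cup \bigcup_{i \in S} Q_i$. Taking the union with $U$ and using $D_\emptyset \subseteq D_S$, this yields the clean form $H_S = \Vol(W \cup \bigcup_{i \in S} Q_i)$, so that every $H_S$ shares the common ``base'' $W$ and differs only by which of the sets $Q_i$ are adjoined.

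Next I would evaluate $\sum_{S \subseteq [k]} (-1)^{|S|} H_S$ by writing each volume as the integral of the indicator $[x \in W \cup \bigcup_{i \in S} Q_i]$ and exchanging the finite sum with the integral. For a fixed point $x$ I distinguish two cases. If $x \in W$, then $x$ lies in $W \cup \bigcup_{i\in S} Q_i$ for every $S$, so its total contribution is $\sum_{S \subseteq [k]} (-1)^{|S|} = 0$. If $x \notin W$, then membership reduces to $x \in \bigcup_{i \in S} Q_i$, i.e.\ to $S \cap J(x) \neq \emptyset$ where $J(x) := \{i : x \in Q_i\}$; the alternating sum over such $S$ equals $-\sum_{S \subseteq [k]\setminus J(x)} (-1)^{|S|}$, which by the binomial identity is $-1$ if $J(x) = [k]$ and $0$ otherwise. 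Hence the pointwise contribution is $-[x \in \bigcap_i Q_i \text{ and } x \notin W]$, and integrating gives $\sum_{S} (-1)^{|S|} H_S = -\Vol\big((\bigcap_i Q_i) \setminus W\big)$.

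Finally I would identify this set with $A \setminus U$. Since $W = (\bigcup_i \tilde{T}_0^{(i)}) \cup U$ and $\bigcap_i Q_i = \bigcap_i \tilde{T}_1^{(i)}$, we obtain $(\bigcap_i Q_i) \setminus W = \big((\bigcap_i \tilde{T}_1^{(i)}) \setminus \bigcup_i \tilde{T}_0^{(i)}\big) \setminus U = A \setminus U$, where the middle equality is exactly the description of $A$ from \lemref{AcapDS}. Therefore $\sum_S (-1)^{|S|} H_S = -\Vol(A \setminus U) = -\Vol(A) + \Vol_A(C_M)$, and rearranging yields the claim. The case $k=1$ recovers the earlier single-dimension identity $\Vol_A(C_M) = \Vol(A) + \Vol(T_0 \cup U) - \Vol(T_1 \cup U)$, which serves as a sanity check.

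As for the main obstacle, the only genuinely delicate point is the bookkeeping in the second paragraph: recognizing that this is \emph{not} the ordinary inclusion--exclusion for the measure of a union (which would compute $\Vol(\bigcup_i Q_i)$), but a variant in which the base set $W$ forces total cancellation on $W$ while the alternating sum survives only on the full intersection $\bigcap_i Q_i$. Getting the signs right and invoking $\sum_{S \subseteq T} (-1)^{|S|} = [T = \emptyset]$ twice --- once over all of $[k]$ and once over $[k] \setminus J(x)$ --- is where care is needed; the remainder is routine set algebra.
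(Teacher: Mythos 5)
Your proof is correct. It establishes the same identity by the same underlying mechanism --- inclusion--exclusion over the index set $[k]$, anchored by \lemref{AcapDS} --- but the execution differs from the paper's in a way worth noting. The paper first isolates $\Vol(A\setminus U)$, applies the (dual) inclusion--exclusion principle for the measure of an intersection, $\Vol\bigl(\bigcap_i X_i\bigr)=\sum_{\emptyset\ne S}(-1)^{|S|+1}\Vol\bigl(\bigcup_{i\in S}X_i\bigr)$, to the sets $X_i = D_{\{i\}}\setminus(D_\emptyset\cup U)$, then identifies $\Vol(D_S\setminus(D_\emptyset\cup U))$ with $H_S-H_\emptyset$ and separately collapses the correction term $\sum_{\emptyset\ne S}(-1)^{|S|+1}(-H_\emptyset)=-H_\emptyset$. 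You instead evaluate $\sum_{S\subseteq[k]}(-1)^{|S|}H_S$ in one pass by integrating the pointwise alternating sum of indicators of $W\cup\bigcup_{i\in S}Q_i$ with $W=D_\emptyset\cup U$ and $Q_i=\tilde{T}_1^{(i)}$, invoking $\sum_{S\subseteq T}(-1)^{|S|}=[T=\emptyset]$ twice; this is essentially an inlined proof of the very inclusion--exclusion variant the paper cites as a black box. Your route is more self-contained and makes the cancellation structure transparent (contributions vanish on $W$ and survive only on $\bigcap_i Q_i$), at the cost of a little more indicator bookkeeping; the paper's route is shorter on the page but leans on a form of inclusion--exclusion (for intersections, relative to a fixed base set) that a reader must supply. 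All your intermediate claims check out: $D_S=D_\emptyset\cup\bigcup_{i\in S}\tilde{T}_1^{(i)}$ follows from $\tilde{T}_0^{(i)}\subseteq\tilde{T}_1^{(i)}$, the exchange of the finite sum with the integral is harmless since all sets are bounded, the identification $\bigl(\bigcap_i\tilde{T}_1^{(i)}\bigr)\setminus\bigcup_i\tilde{T}_0^{(i)}=A$ is exactly what the proof of \lemref{AcapDS} provides, and the implicit assumption $k\ge 1$ holds throughout this section.
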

\begin{proof}
  In this proof we write for short $U := \union(C_M)$. We clearly have
  \begin{align} \label{eq:firststep}
    \Vol(A) - \Vol_A(U) = \Vol(A \setminus U).
  \end{align}
  We first show 
  \begin{align} \label{eq:toshow}
    \Vol(A \setminus U) = \sum_{\emptyset \ne S \subseteq [k]} (-1)^{|S|+1} (H_S - H_\emptyset),
  \end{align}
  and simplify the right hand side later.
  Using $A = \bigcap_{1 \le i \le k} D_{\{i\}} \setminus D_\emptyset$ (\lemref{AcapDS}) and the inclusion-exclusion principle we arrive at
  \begin{align*}
    \Vol(A \setminus U) 
    &= \Vol\Big( \bigcap_{1 \le i \le k} D_{\{i\}} \setminus (D_\emptyset \cup U) \Big)  \\
    &= \sum_{\emptyset \ne S \subseteq [k]} (-1)^{|S|+1} \Vol\bigg( \bigcup_{i \in S} D_{\{i\}} \setminus (D_\emptyset \cup U) \bigg).
  \end{align*}
  Note that $D_S = \bigcup_{i \in S} D_{\{i\}}$, so that the above equation simplifies to
  \begin{align} \label{eq:undnocheine}
    \Vol(A \setminus U) 
    &= \sum_{\emptyset \ne S \subseteq [k]} (-1)^{|S|+1} \Vol\bigg( D_S \setminus (D_\emptyset \cup U) \bigg)
  \end{align}    
  
  Using the definition of $H_S$ and $D_\emptyset \subseteq D_S$ for any $S \subseteq [k]$, we get
  \begin{align*}
    H_S - H_\emptyset 
    &= \Vol(D_S \cup U) - \Vol(D_\emptyset \cup U) \\
    &= \Vol((D_S \cup U) \setminus (D_\emptyset \cup U))  \\
    &= \Vol( D_S \setminus (D_\emptyset \cup U) ).
  \end{align*}
  Plugging this into (\ref{eq:undnocheine}) yields (\ref{eq:toshow}).
  
  Observe that 
  \begin{align*}
    \sum_{\emptyset \ne S \subseteq [k]} (-1)^{|S|+1} (- H_\emptyset)
    &= - H_\emptyset.
  \end{align*}    
  This allows to further simplify (\ref{eq:toshow}) to
  \begin{align*}
    \Vol(A \setminus U) = \sum_{S \subseteq [k]} (-1)^{|S|+1} H_S.
  \end{align*}
  Plugging this into (\ref{eq:firststep}) yields the desired equation. \qed
\end{proof}

As $\Vol(A) = \Delta^{d-k} \cdot \prod_{1 \le i \le k} \Vol(A^{(i)})$ is trivial, we have reduced the computation of $\Vol_A(C_M)$ to $2^k = \Oh(1)$ instances of \GRND{2k}, each consisting of $\Oh(n)$ boxes.
During the construction of these instances we need to sort the coordinates, so that we need additional time $\Oh(n \log n)$. This yields
\begin{align*}
  \TKMP(n,d) \le \Oh(\TGRND{2k}(\Oh(n),d+k) + n \log n).
\end{align*}
Because of the lower bound from \corref{nlogn}, we have $\TGRND{2k}(\Oh(n),d+k) = \Omega(n \log n)$, so we can hide the second summand in the first,
\begin{align*}
  \TKMP(n,d) \le \Oh(\TGRND{2k}(\Oh(n),d+k)).
\end{align*}
We may use the technical \lemref{technical} to get rid of the inner $\Oh$:
This lemma guarantees an algorithm with runtime $\TGRND{2k}'(n,d)$ such that we get
\begin{align*}
  \TKMP(n,d) \le \Oh(\TGRND{2k}'(\Oh(n),d+k)) \le \Oh(\TGRND{2k}(n,d+k)).
\end{align*}
This finishes the proof.

\section{Proof of \lemref{technical}} \label{sec:technical}

In this section we prove the technical \lemref{technical}.

\begingroup
\def\thelemma{\ref{lem:technical}}
\begin{lemma}
  Fix $0 \le k \le d$ and $c > 1$. If there is an algorithm for \GRND{k} with runtime $\TGRND{k}(n,d)$ then there is another algorithm for \GRND{k} with runtime $\TGRND{k}'(n,d)$ satisfying
  \begin{align*}
    \TGRND{k}'(c n,d) \le \Oh(\TGRND{k}(n,d)).
  \end{align*}
\end{lemma}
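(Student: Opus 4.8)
The goal is to manufacture, from the given algorithm $\mathcal{A}$ for \GRND{k}, a new algorithm $\mathcal{A}'$ (allowed to depend on the fixed constant $c$) whose running time on an instance of $cn$ boxes is at most a $d$-dependent constant times the running time of $\mathcal{A}$ on $n$ boxes. Writing $N=cn$, this means $\mathcal{A}'$ should solve instances of size $N$ in time $\Oh(\TGRND{k}(N/c,d))$, i.e.\ it must be a genuinely constant-factor faster version of $\mathcal{A}$. So the first thing I would do is reduce the lemma to the construction of such a constant-factor speedup of $\mathcal{A}$, and then argue that this speedup is enough to absorb the cost of scaling the input by $c$.

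Before committing to a method, I would record why a purely black-box route is hopeless: there is no volume-preserving way to compress a \GRND{k} instance of $cn$ boxes into $\Oh(1)$ instances of at most $n$ boxes each. Already for \HYP\ (the case $k=d$) each box is determined by a single point, and the union volume of $cn$ generic grounded points cannot in general be reproduced from only $n$ points, so no constant number of calls to $\mathcal{A}$ on size-$\le n$ instances can recover the answer. Hence the speedup cannot come from shrinking the instance; it must come from inside the algorithm.

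The plan is therefore a linear-speedup-style simulation of $\mathcal{A}$: I process the computation of $\mathcal{A}$ in blocks of a constant number $s$ of elementary steps, precomputing (tabulating) the effect of each block so that $\mathcal{A}'$ advances through $s$ steps of $\mathcal{A}$ in $\Oh(1)$ amortized time, where $s$ is a constant I am free to pick depending only on $c$ and $d$. A single initial pass to read the $N$ input boxes costs $\Oh(N)$; since any correct algorithm must at least inspect its input we have $\TGRND{k}(n,d)=\Omega(n)$ (cf.\ \corref{nlogn}), so this additive $\Oh(N)=\Oh(cn)$ term is swallowed by $\Oh(\TGRND{k}(n,d))$. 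The remaining work of $\mathcal{A}$, which costs $\TGRND{k}(N,d)$, is sped up by the factor $s$, giving $\mathcal{A}'$ a running time of $\Oh(N)+\Oh(\TGRND{k}(N,d)/s)$. It then remains to choose $s$ so that $\TGRND{k}(cn,d)/s\le\Oh(\TGRND{k}(n,d))$; since the runtimes in scope are of the form $n^{g(d)}\polylog n$, the blow-up $\TGRND{k}(cn,d)/\TGRND{k}(n,d)$ equals $c^{g(d)}\polylog$, a quantity depending only on $c$ and $d$. Picking $s$ to be this $d$-dependent constant makes the speedup cancel the blow-up exactly, and the whole constant disappears into the $\Oh$-notation, which by the paper's convention hides all functions of $d$. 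Finally I would check that the block simulation outputs exactly what $\mathcal{A}$ does (correctness) and still solves \GRND{k} instances in dimension $d$ (trivially, as it tackles the same inputs).

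The main obstacle is precisely the last matching step: quantifying how $\TGRND{k}$ grows when its size argument is scaled by the constant $c$, and making the block-simulation speedup rigorous in the geometric (real-RAM-style) computation model so that the chosen constant factor $s$ provably dominates that growth. For the polynomially bounded runtimes that actually arise this is a clean $d$-dependent constant and the argument goes through; the care needed is only in formalizing the speedup and confirming that it suffices uniformly in $n$.
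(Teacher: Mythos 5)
Your proposal goes wrong at the very first fork: the route you dismiss as ``hopeless'' is exactly the one the paper takes, and the route you choose instead does not prove the lemma. The impossibility argument you give for instance compression is incorrect. You do not need a single size-$n$ instance to reproduce the volume of $cn$ grounded boxes; you need $\Oh(1)$ instances whose volumes \emph{combine} to the answer. The paper splits space at two quantile coordinates $a_i = z_{(1-\alpha)n}$ and $b_i = z_{(1+\alpha)n}$ per dimension, with $\alpha = 1/(3d)$, writes $\Vol(M)$ as the sum of the volumes of the restrictions to the resulting slabs, and observes that in each outer slab at most $(1-\alpha)n$ boxes survive (the rest become empty), while in the middle cell all but $2d\,\alpha n = \frac{2}{3}n$ boxes collapse onto the \emph{same} trivial box $[a_1,b_1]\times\ldots\times[a_d,b_d]$, so that cell is either trivial or again has at most $(1-\alpha)n$ relevant boxes. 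This yields $2d+1$ \GRND{k} instances of at most $(1-\alpha)n$ boxes each, constructed in $\Oh(n)$ time, hence $\TGRND{k}'(n/(1-\alpha),d)\le\Oh(\TGRND{k}(n,d))$; iterating a constant number of times boosts the factor $1/(1-\alpha)$ to any $c>1$. Your ``generic points'' objection misses that restriction to a subregion makes many distinct boxes coincide or vanish, and that the volume decomposes additively over the cells of a spatial partition.

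The alternative you propose has two independent fatal problems. First, it is circular where it matters: you ultimately need $\TGRND{k}(cn,d)/\TGRND{k}(n,d)$ to be bounded by a constant depending only on $c$ and $d$, and you justify this by assuming the runtime has the form $n^{g(d)}\polylog n$. The lemma makes no such assumption --- $\TGRND{k}$ is the runtime of an \emph{arbitrary} given algorithm --- and under your assumption the lemma is immediate with $\TGRND{k}'=\TGRND{k}$, so the entire speedup machinery does no work precisely when it would be needed. Second, the block-simulation speedup is a Turing-machine phenomenon that relies on recoding the tape into a larger alphabet; it has no counterpart in the (real-)RAM model in which these geometric algorithms are analyzed, since one cannot tabulate the effect of $s$ steps over an unbounded register space. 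You flag this as ``the main obstacle,'' and it is: neither half of your argument can be repaired without, in effect, rediscovering the decomposition you ruled out.
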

\endgroup

\begin{proof}
  Let $M$ be an instance of \GRND{k} of size $|M| = n$. Denote by $z_1\le \ldots \le z_{2n}$ the coordinates in the first dimension of all boxes in $M$. Let $a_1 := z_{(1-\alpha)n}$ and $b_1 := z_{(1+\alpha)n}$, where $\alpha := 1/(3d)$.
  Denote by $\{x_1 \le a_1\}$ the set of all points $x=(x_1,\ldots,x_d) \in \Rd$ with $x_1 \le a_1$, similarly for $\{a_1 \le x_1 \le b_1\}$ and $\{x_1 \ge b_1\}$. We consider the three \KMP\ instances
  \begin{align*}
    M_a &:= \{ B \cap \{x_1 \le a_1\} \mid B \in M\},  \\
    M_b &:= \{ B \cap \{x_1 \ge b_1\} \mid B \in M\},  \\
    M' &:= \{ B \cap \{a_1 \le x_1 \le b_1\} \mid B \in M\}.
  \end{align*}
  Note that all three instances can be seen as \GRND{k} instances: Projected onto the first $k$ dimensions, all boxes in $M_a$ share the vertex $(0,\ldots,0)$, all boxes in $M_b$ share the vertex $(b_1,0,\ldots,0)$, and all boxes in $M'$ share the vertex $(a_1,0,\ldots,0)$, so after translation they all share the vertex $(0,\ldots,0)$. Moreover, $\{x_1 < a_1\}$ and $\{x_1 > b_1\}$ contain only $(1-\alpha)n$ coordinates of boxes in $M$ in the first dimension. Hence, there are at most $(1-\alpha)n$ boxes intersecting $\{x_1 < a_1\}$, so after deleting boxes with volume 0 we get $|M_a| \le (1-\alpha)n$, similarly for $M_b$. This reasoning does not work for $M'$, it might even be that all $n$ boxes are present in $M'$: If a box has left coordinate smaller than $a_1$ and right coordinate larger than $b_1$, then none of its coordinates is seen in $\{a_1 \le x_1 \le b_1\}$, although it has non-empty intersection with $\{a_1 \le x_1 \le b_1\}$. However, such a box in $M'$ is trivial in the first dimension: its coordinates in the first dimension are simply $[a_1,b_1]$. If all boxes in $M'$ were trivial in the first dimension, then $M'$ would clearly be simpler than the input instance. Although this is not the case, we can bound the number of boxes in $M'$ that are non-trivial in the first dimension: Since there are at most $2 \alpha n$ coordinates~$z_i$ in $[a_1,b_1]$, all but at most $2 \alpha n$ boxes in $M'$ are trivial in the first dimension. Thus, also $M'$ is easier than the input instance $M$, in a certain sense.
  
  Note that $\Vol(M) = \Vol(M_a) + \Vol(M_b) + \Vol(M')$ and $M_a,M_b$ are strictly easier than $M$, as they contain at most $(1-\alpha)n$ boxes. We have to simplify $M'$ further. 
  For this, we use the same construction as above (on $M$ and dimension 1) on $M'$ and dimension 2, i.e., we split by coordinates in dimension 2 at $a_2$ and $b_2$. This yields three \GRND{k} instances. Two of them contain at most $(1-\alpha)n$ boxes. The third one, $M''$, may contain up to $n$ boxes. However, all but at most $4 \alpha n$ of these boxes are trivial in the first and second dimension, meaning that their projection onto the first 2 dimensions is $[a_1,b_1]\times [a_2,b_2]$. 
  
  Iterating this reduction $d$ times yields $2d$ instances of \GRND{k} containing at most $(1-\alpha)n$ points and one instance $M^*$ that may contain up to $n$ boxes. However, all but at most $2d \, \alpha n$ of these boxes are trivial in all $d$ dimensions, meaning that they are equal to $[a_1,b_1] \times \ldots \times [a_d,b_d]$. Since all boxes in $M^*$ are contained in $[a_1,b_1] \times \ldots \times [a_d,b_d]$, if any such trivial box exists, the volume of $M^*$ is trivial. Otherwise $M^*$ only contains at most $2d \, \alpha n = \frac23 n \le (1-1/(3d))n = (1-\alpha)n$ boxes. Thus, we have reduced the computation of $\Vol(M)$ to $2d+1$ instances of \GRND{k} with at most $(1-\alpha)n$ boxes each. The reduction itself can be made to run in $\Oh(n)$ time. Hence, if we solve the reduced problems by an algorithm with runtime $\TGRND{k}(n,d)$, then we get an algorithm with runtime $\TGRND{k}'(n,d)$ satisfying
  \begin{align*}
    \TGRND{k}'(n,d) \le (2d+1) \TGRND{k}((1-\alpha)n,d) + \Oh(n).
  \end{align*}
  As every algorithm for \GRND{k} at least has to read its whole input, we can hide the $\Oh(n)$ by the first term,
  \begin{align*}
    \TGRND{k}'(n,d) \le \Oh( \TGRND{k}((1-\alpha)n,d) ),
  \end{align*}
  or, 
  \begin{align*}
    \TGRND{k}'(n/(1-\alpha),d) \le \Oh( \TGRND{k}(n,d) ).
  \end{align*}
  Repeating this construction an appropriate number of times we can increase the constant $1/(1-\alpha)$ to any constant $c > 1$, while the factor on the right hand side is still bounded by a constant. This finally yields an algorithm with runtime satisfying
  \begin{align*}
    \TGRND{k}''(cn,d) \le \Oh(\TGRND{k}(n,d)).
  \end{align*} \qed
\end{proof}

\section{Conclusion}  \label{sec:conclusions}

We presented reductions between the special cases \CKMP, \UCKMP, \HYP, and \GRND{k} of Klee's measure problem. These reductions imply statements about the runtime needed for these problem variants. We established \HYP\ as the easiest among all studied special cases, and showed that the variants \CKMP\ and \UCKMP\ have polynomially related runtimes.
Moreover, we presented a reduction from the general case of \KMP\ to \GRND{2k}. This allows to transfer \Wone-hardness from \KMP\ to all special cases, proving that no $n^{o(d)}$ algorithm exists for any of the special cases assuming the Exponential Time Hypothesis. Moreover, assuming that no improved algorithm exists for \KMP, we get a tight lower bound for a recent algorithm for \GRND{2}, and a lower bound of roughly $n^{(d-1)/4}$ for all other special cases. Thus, we established some order among the special cases of Klee's measure problem.

Our results lead to a number of open problems, both asking for new upper and lower bounds:

\begin{itemize}
\item Is there a polynomial relation between $\HYP$ and $\UCKMP$, similar to  $\CKMP$ and $\UCKMP$, or do both problems have significantly different runtimes?
\item Show that no improved algorithm exists for \KMP, e.g., assuming the Strong Exponential Time Hypothesis, as has been done for the Dominating Set problem, see~\cite{fptsurvey}. Or give an improved algorithm.
\item Assuming that no improved algorithm for \KMP\ exists, we know that the optimal runtimes of \HYP\ and \CKMP/\UCKMP\ are of the form $n^{c_d\cdot d \pm \Oh(1)}$, with $c_d \in [1/4,1/3]$. Determine the correct value of~$c_d$.
\item Generalize the $\Oh(n^{(d-1)/2} \log^2 n)$ algorithm for \GRND{2}~\cite{YildizS12} to an $\Oh(n^{(d-k)/2 + o(1)})$ algorithm for \GRND{2k}. This would again be optimal by \thmref{thereduction}.
\item We showed the relation $\TKMP(n,d) \le \Oh(\TGRND{2k}(n,d+k))$. Show an inequality in the opposite direction, i.e., a statement of the form $\TGRND{k}(n,d) \le \Oh(\TKMP(n,d'))$ with $d' < d$.
\end{itemize}


\renewcommand{\bibsection}{\section*{References}}


\end{document}